\documentclass[preprint,12pt]{elsarticle}

\usepackage{amssymb}
\usepackage{amsmath}
\usepackage{graphicx}
\usepackage{amsmath}
\usepackage{latexsym}
\usepackage{amsfonts}
\usepackage{amsthm}
\usepackage{hyperref}
\theoremstyle{definition}
\newtheorem{theorem}{Theorem}[section]
\newtheorem{definition}{Definition}[section]

\usepackage{xargs}
\usepackage[colorinlistoftodos,prependcaption,textsize=tiny]{todonotes}
\newcommandx{\danny}[2][1=]{\todo[linecolor=red,backgroundcolor=red!25,bordercolor=red,#1]{#2}}
\newcommandx{\matt}[2][1=]{\todo[linecolor=blue,backgroundcolor=blue!25,bordercolor=blue,#1]{#2}}
\newcommandx{\joe}[2][1=]{\todo[linecolor=olive,backgroundcolor=olive!25,bordercolor=olive,#1]{#2}}
\newcommandx{\adams}[2][1=]{\todo[linecolor=Plum,backgroundcolor=Plum!25,bordercolor=Plum,#1]{#2}}

\usepackage{xcolor}

\journal{Journal of Computational Physics}

\begin{document}

\begin{frontmatter}

\title{Structure Preservation using Discrete Gradients in the Vlasov-Poisson-Landau System}
\date{\today}
\author[1]{Daniel S. Finn}
\author[2]{Joseph V. Pusztay}
\author[2]{Matthew G. Knepley}
\author[3]{Mark F. Adams}
\affiliation[1]{National Research Council, Naval Research Laboratory Postdoctoral Fellow}
\affiliation[2]{University at Buffalo, The State University of New York}
\affiliation[3]{Lawrence Berkeley National Laboratory}
\begin{abstract}
We present a novel structure-preserving framework for solving the Vlasov-Poisson-Landau system of equations using a particle in cell (PIC) discretization combined with discrete gradient time integrators. The Vlasov-Poisson-Landau system is an accurate model for studying hot plasma dynamics at a kinetic scale where small-angle Coulomb collisions dominate. Our scheme guarantees conservation of mass, momentum and energy as well as preservation of the monotonicity of entropy production in both the time-continuous and discrete systems. We employ the conservative integrator for both the Hamiltonian Vlasov-Poisson equations and the dissipative Landau equation using the PETSc library (\url{www.mcs.anl.gov/petsc}) to showcase structure-preserving properties.
\end{abstract}

\begin{keyword}
Vlasov-Poisson-Landau \sep Discrete Gradients \sep Structure-preserving \sep PETSc



\end{keyword}

\end{frontmatter}
\newpage



\section{Introduction}\label{sec:Introduction}

In recent years, structure-preserving algorithms for simulating kinetic-scale plasmas have been a major field of interest. The goal of these algorithms is to provide a numerical framework for the Vlasov-Maxwell-Landau system, and its simplifications, that preserves the basic laws of physics associated with plasmas. In particular, an ideal numerical framework should conserve mass, momentum and energy, as well as other thermodynamic qualities, like entropy monotonicity, to enable stable long time simulations. Structure-preserving particle-in-cell (PIC) methods~\cite{Morrison2017,Kraus2017_GEMPIC,Qin2016,Shadwick2014,Stamm2014,Xiao2016,Xiao2018}, based on discretizing either the variational or Hamiltonian structure of the underlying kinetic models, have more recently emerged as a primary methodology for simulating plasmas at the kinetic scale not only because they can preserve the energy of the system, but also because they guarantee a local algebraic charge conservation law and the preservation of the symplectic two-form. In practice, however, tradeoffs often exist between the structure-preserving qualities of these algorithms which must be weighed carefully.

If Coulomb collisions are ignored, the Vlasov-Maxwell-Landau system can be simplified into the collisionless Vlasov-Maxwell, and the nonmagnetic, nonrelativistic Vlasov-Poisson, equations. Each of these simplified systems, as well as other related systems~\cite{Chen2011}, have seen significant progress in algorithms that satisfy energy conservation and also preserve other invariants present in the system, such as the momentum and charge conservation, and the divergence-free nature of the magnetic field~\cite{He2015,Squire2012,Xiao2015}. Algorithms that can capture all of these qualities simultaneously are, however, uncommon. Momentum-conserving algorithms are often subject to spurious ``grid heating'' effects~\cite{Werner2025} which break energy conservation while energy-conserving algorithms introduce asymmetries to the discretization that break translational invariance, via Noether's theorem, and thus momentum conservation~\cite{Markidis2011}. The Particle-in-Fourier~\cite{Evstatiev2013,Mitchell2019} approach is a recently developed method that breaks this tradeoff by representing the plasma's charges density directly in Fourier space. As a result, spatial translational symmetry, and thus momentum, is conserved while energy is conserved in the continuous limit and finite-grid instabilities are avoided by the lack of a grid altogether. For this work, we continue development on PIC methods and show how careful choices in discretization can provide long time stable conservation of the mass, momentum and energy, as well as other thermodynamic variables, like entropy.
 
Beyond the advancements in these collisionless, dissipation-free models, there has been little development in obtaining structure-preserving discretizations for the dissipative Coulomb collisions. While binary collision algorithms have existed since at least 1977~\cite{Itoh1988, Nanbu1997}, they have been limited to using equal particle weights or have otherwise lacked energy conservation. These algorithms have furthermore failed to demonstrate algebraic entropy monotonicity. However, construction of a deterministic particle-basis discretization for the homogeneous Landau collision operator has recently been demonstrated by Carrillo et al.~\cite{Carrillo2020}. Although the time-discrete scheme presented did not guarantee moment conservation or entropy monotonicity, a continuous-time structure preservation was achieved. In~\cite{Carrillo2020}, Carrillo et al. also presented the idea that Coulomb collisions can be interpreted as compressible flow driven by an entropy functional enabling a natural introduction of marker particles with arbitrary weights. A discrete-time scheme for simulating the Landau operator while conserving the moments and preserving the entropy monotonicity was presented in~\cite{Hirvijoki2020}. 

In this paper, collisional approaches described in~\cite{Carrillo2020} and~\cite{Hirvijoki2020} are examined in relation to Gonzalez’s discrete gradient framework~\cite{Gonzalez1996}, a methodology for designing numerical integrators that maintain the conservation laws inherent to Hamiltonian systems.
In his work, Gonzalez develops a clear formalism for the design of conservative time-integration schemes for Hamiltonian systems with symmetry. Through the introduction of discrete gradients, implicit second-order conservative schemes can be constructed for general systems which preserve the Hamiltonian along with first integrals that arise from symmetries. Prior studies have explored generalizing the discrete gradients framework for non-canonical Hamiltonian systems, see~\cite{Cohen2011}, often combining the discrete gradient with a suitable discretization of some positive semi-definite or Poisson matrix multiplying the gradient. However, though described in~\cite{Kraus2017,Hirvijoki2018,Eero2020}, implementing a generalized discrete gradient integrator for the Landau operator has remained a more demanding mathematical and computational task. In~\cite{Hu2024}, an implementation of the Landau operator was constructed with discrete gradient integrators, however, the choice in numerical solver limited the computational efficiency of the algorithm. This study presents alternative approaches to both the Hamiltonian and Landau systems that may help overcome the limitations of existing implementations.

Thus, the rest of the paper is structured as follows. We first introduce the Vlasov-Poisson-Landau system of equations, along with some other useful equations for this framework, in Section~\ref{sec:VPLSystem}.
In Section~\ref{sec:BracketFormulation} we introduce the metriplectic formulation for the Vlasov-Poisson-Landau system and review the spatial discretization used for the collisionless Vlasov-Poisson system and  the collisional Landau equation. In Section~\ref{sec:TemporalDiscretization}, we then present the temporal discretization for both collisionless and collisional systems, starting with a review of the discrete gradient methodology. For the Landau operator, we present a marker-particle discretization, as first shown in~\cite{Carrillo2020,Hirvijoki2021}, and an altered form of the discrete-time integrator that uses discrete gradients to retain structure-preserving properties of the continuous system. We break from Gonzalez's formalism in our choice of discrete gradients as his specific choice can encounter singularities near equilibrium plasma states~\cite{Liu2023}. Throughout Sections~\ref{sec:BracketFormulation} and~\ref{sec:TemporalDiscretization}, we focus on the impact the discretization has on the structure-preserving properties of the continuous system. 
In Section~\ref{sec:NumImp}, we describe a new numerical implementation of both the original Hamiltonian discrete gradient integrator and the new time-discrete integrator, based in the Portable Extensible Toolkit for Scientific Computing (\texttt{PETSc})~\cite{AbhyankarBrownConstantinescuGhoshSmith2014,petsc-web-page,PETScManual}. For the collisionless operator, we test our new implementation with the classic plasma test, Landau damping, focusing on how well the discrete gradient integrator preserves the structure of the plasma system and how various nonlinear solvers impact the accuracy implicit integrator. In the case of the Landau operator, we use a two-species thermalization test to show structure preservation and entropy monotonicity using the discrete gradients based collisional integrator. In both tests, we also compare the discrete gradients method to other time integrators used in previous work to show the advantages of discrete gradients. 

\section{The Vlasov-Poisson-Landau System}\label{sec:VPLSystem}

We consider the Vlasov-Poisson-Landau system of equations, describing the evolution of the mass distribution function $f_s (\boldsymbol{z},t)$ of species $s$ in phase space $\boldsymbol{z} = (\boldsymbol{x}, \boldsymbol{v}) \in \Omega \times \mathbb{R}^{d}$, given by,
\begin{align}
  \frac{d f_{s}}{d t} \equiv \frac{\partial f_{s}}{\partial t}+\boldsymbol{v} \cdot \nabla_{\boldsymbol{x}} f_{s}+\frac{q_s}{m_s} \boldsymbol{E} \cdot \nabla_{\boldsymbol{v}} f_{s} &= C_{s \bar{s}} (f_s, f_{\bar{s}}),\label{eq:VML}\\
  -\epsilon_0 \Delta \phi &= \rho,\label{eq:Poisson}\\
  \boldsymbol{E} &= -\nabla \phi\label{eq:electrostatic},
\end{align}
where $q_s$ and $m_s$ are the species' charge and mass, respectively, $\phi$ the electrostatic potential, $\rho$ the charge density, and $\boldsymbol{E}$ the electric field. The Vlasov operator $d/dt$  describes the streaming of particles influenced by the electrostatic forces~\cite{Vlasov1938}, and the Landau collision operator $C_{s \overline{s}}(f_s, f_{\bar{s}})$ describes the effects due to small-angle Coulomb collisions between the species s and $\bar{s}$~\cite{Landau1936}. We define the Landau collision operator as,
\begin{align}
C_{s \overline{s}}(f_s, f_{\bar{s}}):= \sum_{\bar{s}} -\frac{\nu_{s \bar{s}}}{m_s} \nabla_{\boldsymbol{v}} \cdot \left[ \int \delta (\boldsymbol{x} - \bar{\boldsymbol{x}}) f_s(\boldsymbol{z}) f_{\bar{s}}(\bar{\boldsymbol{z}}) \mathbb{Q}\left(\boldsymbol{v}-\bar{\boldsymbol{v}}\right) \cdot \boldsymbol{\Gamma}_{s \bar{s}} (\mathcal{S}, \boldsymbol{z},\bar{\boldsymbol{z}}) \mathrm{d} \boldsymbol{\bar{z}}\right].
\label{eq:ColOp}
\end{align}
In this formulation, $\nu_{s \bar{s}} = 2\pi q_s q_{\bar{s}} \ln \Lambda$, $\ln \Lambda$ is the Coulomb logarithm, and the vector $\boldsymbol{\Gamma}_{s \bar{s}} (\mathcal{F}, \boldsymbol{z},\bar{\boldsymbol{z}},t)$ is defined by the equation,
\begin{equation}
  \boldsymbol{\Gamma}_{s \bar{s}}(\mathcal{F}, \boldsymbol{z}, \overline{\boldsymbol{z}}):=\frac{1}{m_{s}} \frac{\partial}{\partial \boldsymbol{v}} \frac{\delta \mathcal{F}}{\delta f_{s}}(\boldsymbol{z})-\frac{1}{m_{\bar{s}}} \frac{\partial}{\partial \overline{\boldsymbol{v}}} \frac{\delta \mathcal{F}}{\delta f_{\bar{s}}}(\overline{\boldsymbol{z}}),
\end{equation}
for some arbitrary functional $\mathcal{F}$. In addition, $\mathbb{Q} (\boldsymbol{\xi})$ is the collision kernel, 
\begin{equation}
  \mathbb{Q} (\boldsymbol{\xi}) = \frac{1}{|\boldsymbol{\xi}|} \left( \mathbb{I} - \frac{\boldsymbol{\xi} \boldsymbol{\xi}}{|\boldsymbol{\xi}|^2} \right),
\end{equation}
with $\mathbb{I}$ being the identity matrix. It is worthwhile to note here that $\mathbb{Q}(\boldsymbol{\xi})$ is also a scaled projection onto the plane perpendicular to $\boldsymbol{\xi}$. This property of $\mathbb{Q} (\boldsymbol{\xi})$ will be important in later sections. We also define several useful values for later sections. The charge density, $\rho$, is obtained from the distribution function using the relation,
\begin{equation}
    \rho = \sum_s q_s \int f_s  \mathrm{d} \mathbf{z}.\label{eq:chargedensity}
\end{equation}
In single species models, slowly moving ions may be replaced with a constant neutralizing background charge density, $\sigma$, subtracted from the calculated charge density, $\rho$, at each step. The Hamiltonian of the system is given by the sum of the kinetic energy and the electric field energy, namely,
\begin{equation}\label{eq:Hamiltonian}
  \mathcal{H} = \sum_s \frac{m_s}{2} \int f_s \left|\boldsymbol{v}\right|^2 \mathrm{d} \boldsymbol{z} + \frac{1}{2} \int \left| \nabla \phi \right|^2 \mathrm{d} \boldsymbol{z}.
\end{equation}
The primary moments of the plasma system (mass, momentum and kinetic energy), as well as the system entropy, are defined in their continuous forms by,
\begin{align}
  \mathcal{M} &= \sum_s \int  m_s  f_s \mathrm{d} \boldsymbol{z}, \label{eq:MassFunc} \\
  \mathcal{P} &= \sum_s \int  m_s |\boldsymbol{v}| f_s \mathrm{d} \boldsymbol{z}, \label{eq:MomFunc} \\
  \mathcal{K} &= \sum_s \int \frac{1}{2} m_s |\boldsymbol{v}|^2 f_s \mathrm{d} \boldsymbol{z}, \label{eq:KinEnFunc} \\
  \mathcal{S} &= - \sum_s \int f_s \ln f_s \mathrm{d} \boldsymbol{z}. \label{eq:EntFunc}
\end{align}
Lastly, we define the functional derivative using the Fr\'echet derivative,
\begin{equation}
  \left.\frac{\partial}{\partial \epsilon}\right|_{\epsilon=0} \mathcal{F}\left[f_{s}+\epsilon \delta f_{s}\right]=\int \frac{\delta \mathcal{F}}{\delta f_{s}} \delta f_{s} \mathrm{d} \boldsymbol{z} \equiv \delta \mathcal{F}\left[\delta f_{s}\right],
  \label{FuncDer}
\end{equation}
where $\mathcal{F}$ is any arbitrary functional.

We note that the following derivation applies to the magnetic Vlasov-Maxwell-Landau system of equations as well. However, our current implementation considers the electrostatic limit in which speed of light, $c$, goes to infinity ($c\rightarrow \infty$). This may be viewed as the case where the plasma frequency is much slower than the frequency of the electromagnetic waves, thus, we will for now consider the Vlasov-Poisson-Landau model~\cite{Miloshevich2021}.

\section{Bracket Formulation}\label{sec:BracketFormulation}

The Vlasov-Poisson-Landau equations~\eqref{eq:VML}-\eqref{eq:electrostatic} display both Hamiltonian and dissipative dynamics. In order to handle this, we introduce the concept of metriplectic dynamics~\cite{Morrison1986}. The dissipation-free Vlasov-Poisson system has an infinite-dimensional non-canonical Hamiltonian structure culminating in a infinite-dimensional Poisson bracket $\{\cdot ,~\cdot \}$ with a Hamiltonian, $\mathcal{H}$, as the generating function of the flow. The collisional Landau system can be formulated as an infinite-dimensional metric bracket $( \cdot , \cdot)$ generated by the entropy functional $\mathcal{S}$. We can then define a combined system, for which the evolution of an arbitrary functional $\mathcal{F}$ in time is given by,
\begin{equation}
\frac{d \mathcal{F}}{dt} = \{ \mathcal{F}, \mathcal{H}\} + \left( \mathcal{F}, \mathcal{S} \right). \label{eq:MetriplecticBracket}
\end{equation} 

Before proceeding with any further discussion of the two brackets, we must take a step back and address a few questions that will be raised in the next steps. For both the Poisson and metric brackets we start with the common marker-particle representation for the distributional density,
\begin{equation}
    f_h (\boldsymbol{z}, t) \mathrm{d} \boldsymbol{z} = \sum_p w_p \delta (\boldsymbol{x} - \boldsymbol{\boldsymbol{x}_p} (t)) \delta (\boldsymbol{v} - \boldsymbol{\boldsymbol{v}_p} (t)) \mathrm{d} \boldsymbol{z},
    \label{MarkPartDist}
\end{equation}
where $\boldsymbol{x}_p(t)$, $\boldsymbol{v}_p(t)$ and $w_p$, are the position, velocity and weight of particle $p$, respectively. We first encounter a question when considering the evaluation of the functional derivatives~\eqref{FuncDer} with respect to the distribution~\eqref{MarkPartDist}. This is a essential piece of structure-preserving particle-based discretizations tackled in numerous previous derivations~\cite{Morrison1981,Kraus2017_GEMPIC,Carrillo2020}. We will refer to the solution in~\cite{Eero2020} and give a brief summary here. If the distribution function $f_s$ is restricted to the specific parametrized form $f_h$ given in~\eqref{MarkPartDist}, then the perturbation of $f_s$ also becomes parametrized and takes the form,
\begin{align}
\delta f_{h} \mathrm{d} \boldsymbol{z}=-\sum_{p} w_{p}&\left(\delta\left(\boldsymbol{v}-\boldsymbol{v}_{p}\right) \nabla_{\boldsymbol{x}_p} \delta\left(\boldsymbol{x}-\boldsymbol{x}_{p}\right) \cdot \delta \boldsymbol{x}_{p}\right.\\
&\left.+\delta\left(\boldsymbol{x}-\boldsymbol{x}_{p}\right) \nabla_{\boldsymbol{v}_p} \delta\left(\boldsymbol{v}-\boldsymbol{v}_{p}\right) \cdot \delta \boldsymbol{v}_{p}\right) \mathrm{d} \boldsymbol{z}.
\end{align}
Accordingly, the variation of the functional $\mathcal{F} \left[ f \right]$, being restricted to distributions of type~\eqref{MarkPartDist}, then becomes,
\begin{equation}
\delta \mathcal{F}\left[\delta f_{h}\right]=\sum_{p} w_{p}\left(\left.\nabla_{\boldsymbol{x}_p}\frac{\delta \mathcal{F}}{\delta f_h}\right|_{\boldsymbol{z}_{p}} \cdot \delta \boldsymbol{x}_{p}+\left.\nabla_{\boldsymbol{v}_p} \frac{\delta \mathcal{F}}{\delta f_h}\right|_{\boldsymbol{z}_{p}} \cdot \delta \boldsymbol{v}_{p}\right).
\end{equation} 
When evaluated with respect to the distribution $f_h$, the functional $\mathcal{F}$, however, becomes an ordinary function of the degrees of freedom. That is to say,
\begin{equation}
\mathcal{F} \left[ f_h \right] = F (\boldsymbol{X}, \boldsymbol{V}; W),
\end{equation}
where $\boldsymbol{X} = \{ \boldsymbol{x}_p \}_p$ and $\boldsymbol{V} = \{ \boldsymbol{v}_p \}_p$ denote the collection of marker-particle degrees of freedom, and $W = \{ w_p \}_p$ denotes the collection of particles fixed weights. Therefore, $\delta \mathcal{F} \left[ \delta f_h \right]$ must be equal to the variation of the function $F \left(\boldsymbol{X}, \boldsymbol{V}; W\right)$. This leads to the identities,
\begin{align}
\left.\nabla_{\boldsymbol{x}_p} \frac{\delta \mathcal{F}}{\delta f}\right|_{\boldsymbol{z}_{p}} &=\frac{1}{w_{p}} \nabla_{\boldsymbol{x}_p} F(\boldsymbol{X}, \boldsymbol{V} ; W) \label{Identity1}\\
\left.\nabla_{\boldsymbol{v}_p} \frac{\delta \mathcal{F}}{\delta f}\right|_{\boldsymbol{z}_{p}} &=\frac{1}{w_{p}} \nabla_{\boldsymbol{v}_p} F(\boldsymbol{X}, \boldsymbol{V} ; W). \label{Identity2}
\end{align}

Secondly, if the entropy functional~\eqref{eq:EntFunc} is evaluated with respect to the marker-particle distribution~\eqref{MarkPartDist}, it is not well defined. In particular, as will be shown in Section~\ref{sec:DGDI}, we require an entropy functional for which a gradient can be taken. Furthermore, the entropy functional only gives information from individual particles without considering how particle-particle interactions may affect the entropy of the system. If the global mass, or charge, of the system is conserved, then~\eqref{eq:EntFunc} will be conserved as well and we learn little about our system. To resolve these issues, Carrillo et al.~\cite{Carrillo2020}, proposed introducing a regularized entropy functional where the distribution function~\eqref{MarkPartDist} is first convolved with a radial basis function, also called a \textit{mollifier}, $\psi_{\epsilon}$. In practice, the mollifier can be any radial basis function, but we will focus on the case where the mollifier is a Gaussian, 
\begin{equation}
\psi_{\epsilon}(\boldsymbol{v})=\frac{1}{(2 \pi \epsilon)^{\frac{d}{2}}} \exp \left(-\frac{|\boldsymbol{v}|^{2}}{2 \epsilon}\right),
\end{equation}
for any $\epsilon > 0$. The standard entropy functional can then be replaced by the regularized entropy functional,
\begin{equation}
\mathcal{S} \left[ f_s \right] \Rightarrow \mathcal{S} \left[ \psi_{\epsilon} * f_s \right] = \mathcal{S}_\epsilon \left[ f_s \right], 
\label{RegEntSwitch}
\end{equation}
which enables the evaluation of the entropy functional even for the delta distribution~\eqref{MarkPartDist}. Furthermore, the delta functions appearing in the entropy functional \eqref{RegEntSwitch} can simply be replaced with radial basis functions $\psi_{\epsilon}$ so that,
\begin{equation}
\mathcal{S}_\epsilon \left[ f_s \right]=-\sum_s \int \left(f_s \ast \psi_{\epsilon}\right) \ln \left( f_s \ast \psi_{\epsilon} \right) d\boldsymbol{z}.
\label{RegEntropy}
\end{equation}

\subsection{Poisson Bracket}\label{sec:PoissonBracket}
A non-canonical Poisson bracket for the Vlasov-Maxwell system was introduced in~\cite{Morrison1980}, with term corrections in~\cite{Marsden1982} and its limitation to divergence-free magnetic fields first pointed out in~\cite{Morrison1982}. If we assume the magnetic-free Vlasov-Poisson limit, we may write the Poisson bracket for arbitrary functionals $\mathcal{F}$ and $\mathcal{G}$ as,
\begin{align}\label{eq:PoissonBracket}
  \left\{\mathcal{F},\mathcal{G}\right\} \left[f_s\right] = &\sum_s \int \frac{f_s}{m_s} \left[\frac{\delta \mathcal{F}}{\delta f_s},\frac{\delta \mathcal{G}}{\delta f_s}\right] d\boldsymbol{z}
\end{align}
where $\left[f,g\right] = \nabla_{\boldsymbol{x}} f \cdot \nabla_{\boldsymbol{v}} g - \nabla_{\boldsymbol{x}} g \cdot \nabla_{\boldsymbol{v}} f$ and all the field terms have vanished from the full Vlasov-Maxwell bracket. With the Hamiltonian functional~\eqref{eq:Hamiltonian}, we can represent the time evolution of the functional, $\mathcal{F}$, as,
\begin{equation}
  \frac{d \mathcal{F}}{dt} = \left\{\mathcal{F},\mathcal{H}\right\}.
  \label{eq:PoissonBracket1}
\end{equation}

If the functional $\mathcal{F}$ is given by the continuous distributional density
\begin{equation}
    \mathcal{F}[f] = \int f(\bar{\boldsymbol{x}},\bar{\boldsymbol{v}},t)\, \delta(\boldsymbol{x}-\bar{\boldsymbol{x}})\,\delta(\boldsymbol{v}-\bar{\boldsymbol{v}})\,d\bar{\boldsymbol{z}},
\end{equation}
then the Poisson bracket~\eqref{eq:PoissonBracket1} recovers the collisionless Vlasov–Poisson equation (Eq.~\eqref{eq:VML} with $C_{s\bar{s}}=0$) via the method of characteristics. 
In this formulation, the Eulerian distribution is transported by the Hamiltonian phase-space flow 
\(\Phi_t\), such that \(f(\boldsymbol{z},t)=f_0(\Phi_t^{-1}(\boldsymbol{z}))\). 
The corresponding characteristic equations describe the Lagrangian trajectories in phase space, which form the foundation for particle-based discretizations of the system. 
In practical particle formulations, these trajectories are obtained by discretizing the continuum Poisson bracket using a marker-particle ansatz, as discussed later in this section. 
This finite-dimensional reduction yields canonical equations for the particle coordinates $\boldsymbol{z}_p = (\boldsymbol{x}_p, \boldsymbol{v}_p)$, providing a discrete realization of the same Hamiltonian structure governing the continuum Vlasov–Poisson dynamics.

A family of conserved quantities, called \textit{Casimir Invariants} (or \textit{Casimirs}), are defined as any functional that commutes with any other arbitrary functional, i.e. a functional $\mathcal{C}$ is invariant if $\left\{\mathcal{C},\mathcal{G}\right\}=0$ for any functional, $\mathcal{G}$. In the context of this problem, any functional of the form $\mathcal{C}\left[f_h\right]=\int C(f_s) \mathrm{d} \mathbf{z}$ is a Casimir, which includes the mass~\eqref{eq:MassFunc} and entropy~\eqref{eq:EntFunc}. 
The momentum~\eqref{eq:MomFunc}, and regularized entropy functionals~\eqref{RegEntropy} may be shown to commute with the Hamiltonian operator, proving these are conserved quantities but not necessarily Casimirs. 
Furthermore, it can be shown that the Hamiltonian~\eqref{eq:Hamiltonian} (or total energy functional) commutes with itself, $\left\{\mathcal{H},\mathcal{H}\right\}=0$, as the bracket is antisymmetric ensuring it is conserved in the continuous, collisionless Vlasov-Poisson form. Proofs of the conservation of the moments, Hamilton functional, and both forms of entropy are shown in~\ref{sec:AppendixA}.

\subsection{Finite Element Discretization of the Poisson Bracket}\label{sec:FEPoissonBracket}
The goal of this spatial and temporal discretization for the Poisson bracket is to preserve as much structure as possible from the continuous form. In our case, we hope to preserve the antisymmetry of the bracket, conserved quantities, and Casimir invariants. To discretize the fields, $\phi$ and $\boldsymbol{E}$, we may introduce compatible finite element spaces for the differential forms:
\begin{itemize}
    \item \textbf{0-forms (scalars)}: A finite element space $V^0_h \subset H^{1} (\Omega)$ used for the electrostatic potential $\phi_h$ with basis functions $\left\{\psi_i\right\}_{i=1}^{N_0}$ spanning $V^0_h$.
    \item \textbf{1-forms (vector fields)}: A finite element space $V^1_h  \subset H (curl;\Omega)$ used for the electric field $\boldsymbol{E}_h$ with basis 1-forms $\left\{\varphi_j\right\}_{j=1}^{N_1}$ spanning $V^1_h$.
\end{itemize}
\noindent With discrete spaces for both fields, $\phi$ and $\boldsymbol{E}$, selected we can proceed with mixed finite element formulation, using Finite Element Exterior Calculus (FEEC) in the style of~\cite{Kraus2017_GEMPIC}. Mixed formulations are locally conservative cell-by-cell, to solver tolerance, as the discrete gradient, curl, and divergence operators commute with interpolation. However, these formulations require increased solver complexity and memory usage which in this current implementation is already high. Therefore, to simplify our implementation, we use the \textit{primal} form with only one field, $\phi$, and proceed with derivation ignoring the $V^1_h$ space.

From the definition of our primal discrete space, we can redefine the field terms in their discrete forms. We start by introducing a mass matrix,
\begin{equation}
    \mathbb{M}_{ij}=\int \psi_i(x) \psi_j(x) dx,
\end{equation}
which represents the $L^2$ inner product on the 0-form. The deposition of charge onto the 0-form basis is calculated by,
\begin{align}
    (\rho_h)_i &= q \int f_h \psi_i (\mathbf{x}) \mathrm{d} \boldsymbol{z}, \\
    &= \int \sum_p q_p w_p \delta (\boldsymbol{x} - \boldsymbol{x}_p(t)) \delta (\boldsymbol{v} - \boldsymbol{v}_p(t)) \psi_i (\boldsymbol{x}) \mathrm{d} \boldsymbol{z}, \\
    &= \sum_p q_p w_p \psi_i (\mathbf{x}_p).
\end{align}
If the potential, $\phi$, is discretized using the basis functions $\psi_i$,
\begin{equation}
    \phi_h (\mathbf{x}) = \sum_{i} \phi_i \psi_i (\mathbf{x}),
\end{equation}
then the discretized electric field can be calculated as,
\begin{align}\label{eq:discEfield}
    \boldsymbol{E}_h(x) &= -\nabla \phi_h (\mathbf{x}), \\
    &= - \sum_{i} \phi_i \nabla \psi_i (\mathbf{x}).
\end{align}

Combining the particle coordinates and field coefficients into a single state vector, let $\mathbf{X}=(\boldsymbol{x}_1,\dots,\boldsymbol{x}_{N_p})$ and $\mathbf{V}=(\boldsymbol{v}_1,\dots,\boldsymbol{v}_{N_p})$ be the stacked particle positions and velocities. The finite‐dimensional Poisson bracket is generated by a skew-symmetric matrix, $\mathbb{J}$, so that,
\begin{equation}
  \left\{F,G\right\} = \left(\nabla_{\boldsymbol{u}} F\right)^{T} \mathbb{J}(\boldsymbol{u}) \nabla_{\boldsymbol{u}} G,\label{eq:discPoissonBracket}
\end{equation}
where $\boldsymbol{u}=\left(\boldsymbol{X},\boldsymbol{V}\right)$. In 3D, each particle $p$ contributes a $6\times6$ canonical Poisson block. This gives, \begin{align}
        \mathbb{M}_p &= diag\left(w_1 m_1,\ldots,w_{N_p} m_{N_p}\right) \otimes \mathbb{I}_3,\text{ and} \\
        \mathbb{M}_p^{-1} &= diag \left(\frac{1}{w_1 m_1},\ldots,\frac{1}{w_{N_p} m_{N_p}}\right) \otimes \mathbb{I}_3,
    \end{align}
representing the full configuration space.
The canonical coupling between particle positions, $\mathbf{X}$, and velocities, $\mathbf{V}$, in the Poisson matrix is then, 
\begin{align}
    \mathbb{J} &= \begin{pmatrix}
        0 & \mathbb{M}_p^{-1} \\
        (\mathbb{M}_p^{-1})^T & 0 \\
    \end{pmatrix}.
\end{align}
In the Vlasov-Maxwell case, there are additional field and particle-field coupling blocks, however in the electrostatic limit these go to zero. 
Writing out the Poisson bracket gives, 
\begin{align}
    \left\{F,G\right\} = &\sum_p \frac{1}{m_p w_p}\left(\nabla_{\boldsymbol{x}_p} F \cdot \nabla_{\boldsymbol{v}_p} G - \nabla_{\boldsymbol{x}_p} G \cdot \nabla_{\boldsymbol{v}_p} F \right).\label{eq:DiscPoissonBracketPointwise}
\end{align}
In this spatially discrete form, the conservation properties are easier to show for each of the moments, as well as the two forms of entropy. We first redefine the moment functionals as ordinary functions of the degrees of freedom,
\begin{gather}
M(\boldsymbol{X}, \boldsymbol{V} ; W)=\sum_{p} w_{p} m_p, \label{eq:DiscMass}\\
P(\boldsymbol{X}, \boldsymbol{V} ; W)=\sum_{p} w_{p} m_p \boldsymbol{v}_{p}, \label{eq:DiscMomentum}\\
K(\boldsymbol{X}, \boldsymbol{V} ; W)=\frac{1}{2} \sum_{p} w_{p} m_p \left|\boldsymbol{v}_{p}\right|^{2}\label{eq:DiscKinEn}.
\end{gather}
Furthermore, in this form the entropy functionals are redefined as,
\begin{gather}
S(\boldsymbol{X}, \boldsymbol{V} ; W)=-\sum_{p} w_{p} \ln w_p, \label{eq:DiscEntropy}\\
S_{\epsilon}(\boldsymbol{X}, \boldsymbol{V} ; W)=-\int \sum_{p} w_{p} \psi_{\epsilon}\left(\boldsymbol{z}-\boldsymbol{z}_{p}\right) \ln \left(\sum_{p^{\prime}} w_{p^{\prime}} \psi_{\epsilon}\left(\boldsymbol{z}-\boldsymbol{z}_{p^{\prime}}\right)\right) \mathrm{d} \boldsymbol{z},\label{eq:DiscRegEntropy}
\end{gather}
and the Hamiltonian function is now,
\begin{equation}\label{eq:DiscHamiltonian}
  H = \frac{1}{2} \sum_p w_p m_p \left|\boldsymbol{\boldsymbol{v}_p}\right|^2 + \frac{1}{2} \phi^T L \phi.
\end{equation}
where $L = \int \nabla \psi_i (\boldsymbol{x}) \cdot \nabla \psi_j (\boldsymbol{x}) \mathrm{d} \boldsymbol{x}$ is the discrete Laplace (or stiffness) matrix.

With the finite-dimensional Poisson bracket and discrete Hamiltonian, $H$, are introduced, the resulting structure naturally yields the canonical equations of motion for each particle, 
\begin{align}
  \frac{d\boldsymbol{x}_p}{dt} &= \{\boldsymbol{x}_p,H\} = \boldsymbol{v}_p, \\
  \frac{d\boldsymbol{v}_p}{dt} &= \{\boldsymbol{v}_p, H\} = -\frac{q_p}{m_p w_p} \boldsymbol{E}(\boldsymbol{x}_p),
\end{align}
thus recovering the standard single-particle dynamics as the finite-dimensional reduction of the continuum Vlasov–Poisson system.
An early example of this explicit reduction was presented by Morrison in~\cite{Morrison1981}.

The mass and entropy functions,~\eqref{eq:DiscMass} and~\eqref{eq:DiscEntropy}, don't depend on either of the dynamical variables $\mathbf{X}$ and $\mathbf{V}$, so the gradients all vanish,
\begin{align*}
    \nabla_x M &= \nabla_v M = 0,\\
    \nabla_x S &= \nabla_v S = 0,
\end{align*}
and thus $\left\{ M,G \right\} = 0$ and $\left\{ S,G \right\} = 0$, which are both Casimirs of the spatially discrete bracket as well. As with the continuous bracket, we prove that the momentum function~\eqref{eq:DiscMomentum} is a conserved quantity and not a Casimir by considering the Poisson bracket with the momentum and Hamiltonian functions,
\begin{align*}
    \left\{P,H\right\} &= \sum_p \frac{1}{m_p w_p}\left(\nabla_{\boldsymbol{x}_p} P \cdot \nabla_{\boldsymbol{v}_p} H - \nabla_{\boldsymbol{x}_p} H \cdot \nabla_{\boldsymbol{v}_p} P \right) \\
    &= \sum_p \frac{1}{m_p w_p}\left( \mathbf{0} \cdot \nabla_{\boldsymbol{v}_p} H - \nabla_{\boldsymbol{x}_p} H \cdot (w_p m_p \mathbf{1}) \right) \\
    &= - \sum_p \nabla_{\boldsymbol{x}_p} H.
\end{align*}
We recall that the discrete Hamiltonian depends on the particle positions only through the self-consistent electrostatic potential, which is obtained from a finite element discretization of Poisson's equation. In particular, given the discrete electric field~\eqref{eq:discEfield}, the total force on all particles can be written as,
\begin{align*}
    \sum_p \boldsymbol{F}_p &= - \sum_p \nabla_{\boldsymbol{x}_p} H,\\
    &= \sum_p q_p \sum_{i} \phi_i \nabla \psi_i (\boldsymbol{x}_p), \\
    &= \sum_{i} \phi_i \sum_p q_p \nabla \psi_i (\boldsymbol{x}_p).
\end{align*}
For the class of finite element discretizations considered here, whose basis functions form a partition of unity and provide an exact representation of the charge density, the electrostatic forces between particles cancel exactly when summed over the system. As a result, the discrete Hamiltonian produces no net self-force, i.e. $\sum_p \boldsymbol{F}_p = - \sum_p \nabla_{\boldsymbol{x}_p} H = 0$. This cancellation relies on the translational invariance of the discrete Hamiltonian, which is ensured here by the use of periodic boundary conditions. In the absence of periodicity, boundary forces break translation invariance and the total force need not vanish. Therefore, $\left\{P,H\right\} = \mathbf{0}$ and momentum remains a conserved quantity of the discretized bracket. These properties are satisfied, for example, by the compatible finite element discretizations employed in the GEMPIC framework~\cite{Kraus2017_GEMPIC}.

From~\eqref{eq:DiscPoissonBracketPointwise}, we see that the antisymmetry of the continuous Poisson bracket is preserved in the discrete form and total energy~\eqref{eq:DiscHamiltonian} remains conserved in the discrete Poisson bracket, as well. For regularized entropy~\eqref{eq:DiscRegEntropy}, we first consider how the Poisson bracket reduces with the Hamiltonian function,
\begin{align*}
    \left\{S_{\epsilon},H\right\} &= \sum_p \nabla_{\boldsymbol{x}_p} S_{\epsilon} \cdot \boldsymbol{v}_p.
\end{align*}
Now, consider the gradient of~\eqref{eq:DiscRegEntropy} with respect to $\boldsymbol{x}_p$,
\begin{align*}
    \nabla_{\boldsymbol{x}_p} S_{\epsilon} &= w_p \int \nabla_{\boldsymbol{x}_p} \psi_{\epsilon} (\boldsymbol{z}-\boldsymbol{z}_p) \left(\log h(\boldsymbol{z})+1\right) \mathrm{d} \boldsymbol{z},
\end{align*}
where $h(\boldsymbol{z}) = \sum_{p'} w_{p'} \psi_{\epsilon} (\boldsymbol{z} - \boldsymbol{z}_{p'})$. The bracket may be rewritten using integration by parts as,
\begin{align*}
     \left\{S_{\epsilon},H\right\} &= - \sum_p w_p \boldsymbol{v}_p \cdot \int \psi_{\epsilon} (\boldsymbol{z}-\boldsymbol{z}_p) \nabla_{\boldsymbol{x}_p} \left(\log h(\boldsymbol{z})+1\right) \mathrm{d} \boldsymbol{z},\\
     &= - \sum_p w_p \boldsymbol{v}_p \cdot \int \psi_{\epsilon} (\boldsymbol{z}-\boldsymbol{z}_p)  \frac{\nabla_{\boldsymbol{x}_p} h (\boldsymbol{z})}{h (\boldsymbol{z})} \mathrm{d} \boldsymbol{z}.
\end{align*}
It can be shown that the integrand is antisymmetric in the particle indices ($p$ and $p'$) and since the domain is symmetric all terms in the bracket cancel. We then note that for any smooth, rapidly decaying function kernel, $\psi_{\epsilon}$,
\begin{align*}
    \int \nabla_v \psi_{\epsilon} (\boldsymbol{z}-\boldsymbol{z}_p) f(\boldsymbol{z}) \mathrm{d} \boldsymbol{z} = 0,
\end{align*}
when boundaries are periodic and the regularized entropy remains conserved. We have, therefore, discretized the Poisson bracket in space and shown that each of the thermodynamic quantities preserved in the original continuous bracket have retained their same structure in this spatially discrete form. From here, consider the metric bracket for the collisional Landau operator and the the temporal discretization of both brackets with discrete gradients.

\subsection{Metric Bracket}\label{sec:MetricBracket}

The metric bracket formulation for the Landau collision operator was first introduced in~\cite{Morrison1984} and~\cite{Kaufman1984}. In~\cite{Morrison1986}, the full formalism of the metric bracket was then defined, along with the properties such as symmetry and degeneracy of the bracket. The following derivation follows from these works, as well as the discretizations provided in~\cite{Kraus2017},~\cite{Carrillo2020}, and~\cite{Eero2020}. 

The first step of the derivation of the metric bracket formulation is to obtain the weak form of the collision operator~\eqref{eq:ColOp}. We do this by multiplying~\eqref{eq:ColOp} by a test function, which in this case is an arbitrary species-dependent function $G_s (\boldsymbol{z})$, and integrating over the configuration space,
\begin{align}
\int G_s (\boldsymbol{z}) \frac{\partial f_s}{\partial t} \mathrm{d} \boldsymbol{z} &= \sum_{s} \int G_s (\boldsymbol{z}) C_{s \overline{s}}(f_s, f_{\bar{s}}) \mathrm{d} \boldsymbol{z}, \nonumber\\
&=\sum_{s, \bar{s}} \iint \frac{1}{m_{s}} \frac{\partial G_{s}(\boldsymbol{z})}{\partial \boldsymbol{v}} \cdot \mathbb{W}_{s \bar{s}}(\boldsymbol{z}, \overline{\boldsymbol{z}}) \cdot \boldsymbol{\Gamma}_{s \bar{s}}(\mathcal{S}, \boldsymbol{z}, \overline{\boldsymbol{z}}) d \overline{\boldsymbol{z}} \mathrm{d} \boldsymbol{z}, \nonumber \\
&=\sum_{s, \bar{s}} \frac{1}{2} \iint\left(\frac{1}{m_{s}} \frac{\partial G_s (\boldsymbol{z})}{\partial \boldsymbol{v}}-\frac{1}{m_{\bar{s}}} \frac{\partial G_{\bar{s}}(\overline{\boldsymbol{z}})}{\partial \overline{\boldsymbol{v}}}\right)  \label{WeakForm}\\
&\;\;\;\cdot \mathbb{W}_{s \bar{s}}(\boldsymbol{z}, \overline{\boldsymbol{z}}) \cdot \boldsymbol{\Gamma}_{s \bar{s}}(\mathcal{S}, \boldsymbol{z}, \overline{\boldsymbol{z}}) d \overline{\boldsymbol{z}} \mathrm{d} \boldsymbol{z},\nonumber
\end{align}
where the positive semidefinite matrix $\mathbb{W}_{s \bar{s}}(\boldsymbol{z}, \overline{\boldsymbol{z}})$ is defined as,
\begin{equation}
  \mathbb{W}_{s \bar{s}}(\boldsymbol{z}, \overline{\boldsymbol{z}}) = \nu_{s \bar{s}} \delta(\boldsymbol{x}-\overline{\boldsymbol{x}}) f_{s}(\boldsymbol{z}) f_{\bar{s}}(\overline{\boldsymbol{z}}) \mathbb{Q}(\boldsymbol{v}-\overline{\boldsymbol{v}}).
  \label{Wmat}
\end{equation}

In this subsection, we reduce the configuration space, $\boldsymbol{z}=(\boldsymbol{x},\boldsymbol{v})$, to the spatially homogeneous case, $\boldsymbol{z}=(\boldsymbol{v})$, as is done in~\cite{Carrillo2020,Eero2020}, as the Landau collision operator~\eqref{eq:ColOp} is dependent on velocity alone. Since the electric potential, $\phi$, is a function of space and not velocity, this has the effect of reducing the full Hamiltonian functional to just the kinetic energy functional. Given the weak form~\eqref{WeakForm}, it is now possible to prove conservation of the moments for the continuous Landau operator using the mass, momentum and kinetic energy functionals. Since the left hand side of~\eqref{WeakForm}, with $G_s (\boldsymbol{z})$ replaced by these moment functionals, presents the total collisional rate of change in time of mass density, momentum density and kinetic energy density in the plasma, we can show that the three moments are invariants of the Landau collision operator, and thus conserved, by showing that the right hand side of~\eqref{WeakForm} goes to zero in each case. The conservation of mass is trivial as~\eqref{eq:MassFunc} is not a function of $\boldsymbol{v}$, thus it goes to zero when differentiated in~\eqref{WeakForm}. Similarly, conservation of momentum is simple to show as,
\begin{align*}
\left(\frac{1}{m_{s}} \frac{\partial \mathcal{P}_s (\boldsymbol{z})}{\partial \boldsymbol{v}}-\frac{1}{m_{\bar{s}}} \frac{\partial \mathcal{P}_{\bar{s}}(\overline{\boldsymbol{z}})}{\partial \overline{\boldsymbol{v}}}\right) &= \mathbb{I} - \mathbb{I}, \\
&= 0,
\end{align*}
therefore the right hand side of~\eqref{WeakForm} goes to zero as well. Lastly, conservation of kinetic energy arises from the projection nature of $\mathbb{Q}(\boldsymbol{\xi})$, mentioned in the previous section. Due to the projection property, it can be shown that $\boldsymbol{\xi} \cdot \mathbb{Q}(\boldsymbol{\xi}) = 0$. Plugging~\eqref{eq:KinEnFunc} into~\eqref{WeakForm}, then gives,
\begin{align*}
\int \mathcal{K}_s (\boldsymbol{z}) \frac{\partial f_s}{\partial t} \mathrm{d} \boldsymbol{z} &= \sum_{s, \bar{s}} \frac{1}{2} \iint\left(\frac{1}{m_{s}} \frac{\partial \mathcal{K}_s (\boldsymbol{z})}{\partial \boldsymbol{v}}-\frac{1}{m_{\bar{s}}} \frac{\partial \mathcal{K}_{\bar{s}}(\overline{\boldsymbol{z}})}{\partial \overline{\boldsymbol{v}}}\right) \cdot \mathbb{W}_{s \bar{s}}(\boldsymbol{z}, \overline{\boldsymbol{z}}) \cdot \\
&\qquad\qquad\qquad\qquad\qquad\qquad\qquad\qquad\boldsymbol{\Gamma}_{s \bar{s}}(\mathcal{S}, \boldsymbol{z}, \overline{\boldsymbol{z}}) d \overline{\boldsymbol{z}} \mathrm{d} \boldsymbol{z}, \\
&= \sum_{s, \bar{s}} \frac{1}{2} \iint\left(\boldsymbol{v} - \boldsymbol{\bar{v}} \right) \cdot \left[ \nu_{s \bar{s}} \delta(\boldsymbol{x}-\overline{\boldsymbol{x}}) f_{s}(\boldsymbol{z}) f_{\bar{s}}(\overline{\boldsymbol{z}}) \mathbb{Q}(\boldsymbol{v}-\overline{\boldsymbol{v}}) \right] \cdot \\
&\qquad\qquad\qquad\qquad\qquad\qquad\qquad\qquad\boldsymbol{\Gamma}_{s \bar{s}}(\mathcal{S}, \boldsymbol{z}, \overline{\boldsymbol{z}}) d \overline{\boldsymbol{z}} \mathrm{d} \boldsymbol{z}, \\
&= 0.
\end{align*}

Expressing $G_s (\boldsymbol{z})$ as a functional derivative,
\begin{equation}
\mathcal{G}=\sum_{s} \int G_{s} f_{s} \mathrm{d} \boldsymbol{z} \quad \Longrightarrow \quad \frac{\delta \mathcal{G}}{\delta f_{s}}=G_{s}.
\end{equation}
we can now identify $\boldsymbol{\Gamma}_{s \bar{s}} (\mathcal{G}, \boldsymbol{z}, \boldsymbol{\bar{z}})$ in the weak form~\eqref{WeakForm}, which becomes,
\begin{equation}
\int G_s (\boldsymbol{z}) \frac{\partial f_s}{\partial t} \mathrm{d} \boldsymbol{z} = \sum_{s, \bar{s}} \frac{1}{2} \iint \boldsymbol{\Gamma}_{s \bar{s}} (\mathcal{G}, \boldsymbol{z}, \boldsymbol{\bar{z}}) \cdot \mathbb{W}_{s \bar{s}}(\boldsymbol{z}, \overline{\boldsymbol{z}}) \cdot \boldsymbol{\Gamma}_{s \bar{s}}(\mathcal{S}, \boldsymbol{z}, \overline{\boldsymbol{z}}) d \overline{\boldsymbol{z}} \mathrm{d} \boldsymbol{z}.
\end{equation}
Generalizing this statement for any arbitrary functionals $\mathcal{F}$ and $\mathcal{G}$ provides a symmetric, positive semidefinite metric bracket,
\begin{equation}
(\mathcal{F}, \mathcal{G})= \sum_{s, \bar{s}} \frac{1}{2} \iint \boldsymbol{\Gamma}_{s \bar{s}} (\mathcal{F}, \boldsymbol{z}, \boldsymbol{\bar{z}}) \cdot \mathbb{W}_{s \bar{s}}(\boldsymbol{z}, \overline{\boldsymbol{z}}) \cdot \boldsymbol{\Gamma}_{s \bar{s}}(\mathcal{G}, \boldsymbol{z}, \overline{\boldsymbol{z}}) d \overline{\boldsymbol{z}} \mathrm{d} \boldsymbol{z}.
\label{MetricBracket}
\end{equation}
The collisional evolution of the arbitrary functional $\mathcal{F}$ can finally be generalized using the metric bracket into the functional differential equation,
\begin{equation}
\left.\frac{d \mathcal{F}}{d t}\right|_{\text {coll }}=\left(\mathcal{F}, \mathcal{S}\right).
\label{ColEvo}
\end{equation}

It is again straightforward to verify that,
\begin{align*}
\left( \mathcal{M}_{s} , \mathcal{G} \right) = 0, \quad \left( \mathcal{P}_{s} , \mathcal{G} \right) = 0, \quad \left( \mathcal{K}_{s} , \mathcal{G} \right) = 0,
\end{align*}
confirming that mass, momentum and kinetic energy are invariants of the metric bracket with respect to any arbitrary functional $\mathcal{G}$ and consequently the collisional evolution of these moment functionals vanishes. It is also simple to show that this metric bracket formulation satisfies basic thermodynamic principles, such as monotonicity of entropy,
\begin{equation}
\frac{\partial \mathcal{S}}{\partial t} = \left( \mathcal{S}, \mathcal{S} \right) \geq 0.
\end{equation}
This follows from the positive semidefinite nature of $\mathbb{W} (\boldsymbol{z}, \boldsymbol{\bar{z}})$. 

In the following sections, we examine discretization methods for the metric bracket~\eqref{MetricBracket} and how discrete gradients can be used to retain the desired conservation properties. For the rest of this derivation, we will consider only the single-species case which, consequently, means the species indices $s$ and $\bar{s}$ and the associated summations will be dropped.

\subsection{Marker-Particle Discretization of the Collisional Bracket}\label{sec:MetricBracketDiscretization}

Consider a common discretization for particle-in-cell simulations, primarily based on idea that the distributional density $f_s$ can be represented by the same function~\eqref{MarkPartDist} as in the Poisson bracket discretization.
To move forward with the discretization, we must first consider the local nature of the Landau operator. In practice, the two marker particles will never be at exactly the same location. Therefore, an approximation for the delta function in~\eqref{Wmat}, which is responsible for the localization of the Landau operator, must be used. For this implementation, as suggested by Hirvijoki in~\cite{Eero2020}, we will divide the spatial configuration domain into disjoint, so-called collision cells that share a boundary if adjacent. In this case, the delta function is replaced by an indicator function $\boldsymbol{1} (p, \overline{p})$ which is one if the particles $p$ and $\overline{p}$ are within the collision cell and zero otherwise.

Substituting the discrete distribution function~\eqref{MarkPartDist}, the transformation of the functional derivatives~\eqref{Identity2}, and the replacement of the localizing delta function with an indicator function into the single-species version of the metric bracket~\eqref{MetricBracket}, we get a finite-dimensional formulation of the metric bracket, 
\begin{equation}
(F, G)_{h}=\frac{1}{2} \sum_{p, \overline{p}} \boldsymbol{\Gamma}(F, p, \overline{p}) \cdot \mathbb{W}(p, \overline{p}) \cdot \boldsymbol{\Gamma}(G, p, \overline{p}),
\label{FinDimBrack}
\end{equation} 
where the vector $\boldsymbol{\Gamma}(F, p, \overline{p})$ is given by,
\begin{equation}
\boldsymbol{\Gamma}(F, p, \overline{p}) = \frac{1}{m w_{p}} \frac{\partial F}{\partial \boldsymbol{v}_p} - \frac{1}{m w_{\overline{p}}} \frac{\partial F}{\partial \boldsymbol{v}_{\overline{p}}},
\label{Gamma}
\end{equation}
and the matrix $\mathbb{W}(p, \overline{p})$ is,
\begin{equation}
\mathbb{W}(p, \overline{p}) = \nu w_{p} w_{\bar{p}} \boldsymbol{1}(p, \bar{p}) \mathbb{Q}\left(\boldsymbol{v}_{p}-\boldsymbol{v}_{\bar{p}}\right).
\end{equation}
The finite-dimensional bracket~\eqref{FinDimBrack} then leads to a new collisional evolution expression for functions that depend on particle degrees of freedom,
\begin{equation}
\left. \frac{d F}{d t} \right|_{coll} = \left( F, S_{\epsilon} \right)
\end{equation}
where $S_{\epsilon} (\boldsymbol{X}, \boldsymbol{X}; W) = \mathcal{S} \left[ f_h \right]$ is the regularized entropy functional, defined in~\eqref{RegEntropy}.

To obtain the equation of motion for the particle coordinates, we consider the specific case where $F = \boldsymbol{v}_p$. The general form of this equation of motion is,
\begin{equation}
\left.\frac{d \boldsymbol{v}_{p}}{d t}\right|_{\text {coll }}= \frac{1}{2} \sum_{\bar{p}} \boldsymbol{\Gamma}\left(\boldsymbol{v}_p, p, \bar{p}\right) \cdot \mathbb{W}(p, \overline{p}) \cdot \boldsymbol{\Gamma}\left(S_{\epsilon}, p, \bar{p}\right),
\end{equation}
Since $\boldsymbol{v}_p$ does not depend on $\boldsymbol{v}_{\overline{p}}$,
\begin{align}
\boldsymbol{\Gamma}\left(\boldsymbol{v}_p, p, \bar{p}\right) &= \frac{1 - \delta_{p \overline{p}}}{m w_{p}},
\end{align}
and this expression can be simplified to,
\begin{equation}
\left.\frac{d \boldsymbol{v}_{p}}{d t}\right|_{\text {coll }} = \frac{\nu}{m} \sum_{\bar{p}}  w_{\overline{p}} \boldsymbol{1} (p, \overline{p}) \mathbb{Q} (\boldsymbol{v}_p - \boldsymbol{v}_{\overline{p}}) \cdot \boldsymbol{\Gamma}\left(S_{\epsilon}, p, \bar{p}\right).
\end{equation}

To confirm the accuracy of this discretization, we can recheck the mass, momentum and kinetic energy conservation laws as well as the entropy monotonicity. 
It is straightforward then to show that the moments~\eqref{eq:DiscMass},~\eqref{eq:DiscMomentum}, and~\eqref{eq:DiscKinEn} are invariants of the discrete bracket. For mass and momentum this follows from the fact that $\boldsymbol{\Gamma}\left(M, p, \bar{p}\right) = 0$ and $\boldsymbol{\Gamma}\left(P, p, \bar{p}\right) = \mathbb{I} - \mathbb{I} = 0$. For kinetic energy,  $\boldsymbol{\Gamma}\left(K, p, \bar{p}\right) = \boldsymbol{v}_p - \boldsymbol{v}_{\overline{p}}$ and, consequently, $\boldsymbol{\Gamma}\left(K, p, \bar{p}\right) \cdot \mathbb{W} (p, \overline{p})= 0$. Thus, the collisional evolution equations of the moments are,
\begin{equation*}
\left.\frac{d M}{d t}\right|_{\text {coll }} = (M, S_{\epsilon})_h= 0, \quad \left.\frac{d P}{d t}\right|_{\text {coll }} = (P, S_{\epsilon})_h= 0, \quad \left.\frac{d K}{d t}\right|_{\text {coll }} = (K, S_{\epsilon})_h= 0,
\end{equation*}
and the moments are conserved. Lastly, we observe that the matrix $\mathbb{W} (p, \overline{p})$ in the discrete bracket is positive semidefinite, therefore,
the regularized entropy is trivially dissipated,
\begin{equation}
\left.\frac{d S_{\epsilon}}{d t}\right|_{\text {coll }}=\left(S_{\epsilon}, S_{\epsilon}\right)_{h} \geq 0.
\end{equation}

Now, in the proceeding section, we may consider the temporal discretization of the two brackets.


\section{Temporal Discretization}\label{sec:TemporalDiscretization}

Before addressing the time-discrete forms of the Poisson and metric brackets, it is pertinent to review the concept of discrete gradients, formally described by Gonzalez in~\cite{Gonzalez1996}. 
Consider the symplectic space $\left( V ,  \Omega \right)$ with $V$ open in $m$-dimensional Euclidian space $\mathbb{R}^m$ with points denoted by $\boldsymbol{z} = \left( \boldsymbol{z}^1, \ldots, \boldsymbol{z}^m \right)$ and $\Omega $ is a viewed as a bilinear form in the tangent space $T_{\boldsymbol{z}} V$, such that $\Omega : T_{\boldsymbol{z}} V \times T_{\boldsymbol{z}} V \rightarrow \mathbb{R}^m$, which is skew-symmetric, i.e. $\Omega (\boldsymbol{z}) = - \Omega(\boldsymbol{z})$. If we choose canonical coordinates for $\boldsymbol{z}$, the symplectic bilinear form, $\Omega(\boldsymbol{z})$ becomes a skew-symmetric matrix, $J(\boldsymbol{z})$. We replace the bilinear form with the new skew-symmetric matrix in the symplectic space, now $\left(V,J\right)$, and attach the Hamiltonian, $H:V\rightarrow \mathbb{R}^m$, to form a Hamiltonian system $(V,J,H)$ and give the continuous form of the Hamiltonian differential equations,
\begin{align}
 \dot{\boldsymbol{z}} &= X_H \left( \boldsymbol{z} \right), \label{HDE} \\
 &= J(\boldsymbol{z}) \nabla H(\boldsymbol{z}), \nonumber
\end{align}
where $X_H : V \rightarrow \mathbb{R} $ is the Hamiltonian vector field associated with the Hamiltonian function $H$.
Furthermore, we introduce the notion of a first integral, which is a smooth function $A: V \rightarrow \mathbb{R}$ that is constant along any solution and satisfies the orthogonality condition,
\begin{equation}
\nabla A(\boldsymbol{z}) \cdot X_H(\boldsymbol{z}) = 0.
\end{equation}
Note that, given the skew-symmetry of $J$, the Hamiltonian $H$ is a first integral for the Hamiltonian system $(V,J,H)$

We desire a discrete framework for numerically approximating schemes of the form,
\begin{equation}
  \frac{\boldsymbol{z}_{n+1} - \boldsymbol{z}_{n}}{\Delta t} = \boldsymbol{X}_H \left( \boldsymbol{z}_{n}, \boldsymbol{z}_{n+1} \right),
  \label{dHDE}
\end{equation}
that inherit the integral $A$, where $\boldsymbol{X}_H: V \times V \rightarrow \mathbb{R}^m$ is a two-point approximation to the exact vector field, $X_H$, e.g. $\boldsymbol{X}_H \left( \boldsymbol{z}_{n}, \boldsymbol{z}_{n+1} \right) \approx X_H \left( \boldsymbol{z}_{n+\frac{1}{2}}\right)$ and $\boldsymbol{z}_{n+\frac{1}{2}} = \frac{1}{2}\left(\boldsymbol{z}_{n}+\boldsymbol{z}_{n+1}\right)$. Assume there exists a vector $\overline{\nabla} A \left(\boldsymbol{z}_{n}, \boldsymbol{z}_{n+1}\right) \in \mathbb{R}^m$, for any $\boldsymbol{z}_{n}, \boldsymbol{z}_{n+1} \in \mathbb{R}$, that has the properties,
\begin{gather}
  \overline{\nabla} A \left( \boldsymbol{z}_{n}, \boldsymbol{z}_{n+1} \right) \approx \nabla A \left( \frac{\boldsymbol{z}_{n} + \boldsymbol{z}_{n+1}}{2} \right),\label{Prop1:a}\\
  \overline{\nabla} A \left( \boldsymbol{z}_{n}, \boldsymbol{z}_{n+1} \right) \cdot \left( \boldsymbol{z}_{n}-\boldsymbol{z}_{n+1}\right) = A\left(\boldsymbol{z}_{n}\right) - A\left(\boldsymbol{z}_{n+1}\right).\label{Prop1:b}
\end{gather}
We can therefore extend these properties to the approximation scheme such that,
\begin{align}
  A\left(\boldsymbol{z}_{n+1}\right) - A\left(\boldsymbol{z}_{n}\right) &= \overline{\nabla}A\left(\boldsymbol{z}_{n},\boldsymbol{z}_{n+1}\right) \cdot \left( \boldsymbol{z}_{n+1} - \boldsymbol{z}_{n}\right), \\
  &= \Delta t \overline{\nabla} A \left(\boldsymbol{z}_n, \boldsymbol{z}_{n+1}\right) \cdot \boldsymbol{X}_H \left(\boldsymbol{z}_n , \boldsymbol{z}_{n+1}\right),
  \label{Prop2}
\end{align}
using \eqref{dHDE}. Thus, if the approximate vector field $\boldsymbol{X}_H$ satisfies the discrete orthogonality condition,
\begin{equation}
  \overline{\nabla} A (\boldsymbol{z}_{n+1},\boldsymbol{z}_{n}) \cdot \boldsymbol{X}_H (\boldsymbol{z}_{n+1},\boldsymbol{z}_{n}) = 0, \;\;\;\; \forall \boldsymbol{z}_{n+1},\boldsymbol{z}_{n} \in V,
\end{equation}
then $A$ is an integral of~\eqref{dHDE}. We can now make the following definition:
\begin{definition}
  A \textit{discrete gradient} for a smooth function $A:V \rightarrow \mathbb{R}$ is a mapping $\overline{\nabla}: V \times V \rightarrow \mathbb{R}^m$ with the following properties.
\begin{enumerate}
  \item Directionality: $\overline{\nabla} A(\boldsymbol{z}_{n+1},\boldsymbol{z}_{n}) \cdot (\boldsymbol{z}_{n}-\boldsymbol{z}_{n+1}) = A(\boldsymbol{z}_{n}) - A(\boldsymbol{z}_{n+1})$ for any $\boldsymbol{z}_{n+1},\boldsymbol{z}_{n} \in V$.
  \item Consistency: $\overline{\nabla} A(\boldsymbol{z}_{n+1},\boldsymbol{z}_{n})=\nabla A(\frac{\boldsymbol{z}_{n+1}+\boldsymbol{z}_{n}}{2}) + \mathcal{O}(||\boldsymbol{z}_{n}-\boldsymbol{z}_{n+1}||)$ for all $\boldsymbol{z}_{n+1},\boldsymbol{z}_{n} \in V$ with $||\boldsymbol{z}_{n}-\boldsymbol{z}_{n+1}||$ sufficiently small.
  \end{enumerate}
  \label{DGDef}
\end{definition}
We attach the discrete gradient to the Hamiltonian system and refer to this new system $(V,\bar{J},\overline{\nabla},H)$ as a \textit{discrete Hamiltonian system} and associate it with the difference equation~\eqref{dHDE}, where the discrete version of the Hamiltonian vector field, $\boldsymbol{X}_H$, is now defined as,
\begin{equation}
\boldsymbol{X}_H = \bar{J}(\boldsymbol{z}_{n+1},\boldsymbol{z}_{n}) \, \overline{\nabla} H(\boldsymbol{z}_{n+1},\boldsymbol{z}_{n}),
\end{equation}
so that $H$ is a first integral due to the skew-symmetry of $\bar{J}$, now a discrete form of $J$ which approaches the continuous form in the limit $\boldsymbol{z}_{n+1}\rightarrow \boldsymbol{z}_{n}$. Now that we have a framework for the construction of a discrete gradient, we can present one potential form for the discrete gradient, often referred to as the average discrete gradient~\cite{Harten1997}.

\begin{definition}
  Let $A : \mathbb{R}^m \rightarrow \mathbb{R}$ be a smooth function. Then a discrete derivative for $A$ is defined by,
  \begin{equation}
  \overline{\nabla} A (x,y) = \int_0^1 \nabla A \left(\left(1-\xi\right)x + \xi y \right) d\xi
  \label{DGEq}
  \end{equation}
  for any two points $x, y \in \mathbb{R}^m$.
\label{DGPr}
\end{definition}

Detailed proofs of the directionality and consistency conditions for discrete gradients can be found in~\cite{Gonzalez1996,Mansfield2009}. We note that the choice of discrete gradient for the Vlasov-Poisson-Landau system is important to the success of the numerical solver, as alternative discrete gradients~\cite{Gonzalez1996,Itoh1988} contain singularities in the limit where $\left|\left|\boldsymbol{z}_n-\boldsymbol{z}_{n+1}\right|\right|\rightarrow 0$ \cite{Liu2023}. 


This approach to building first integral preserving time-integration schemes has been studied in depth for Hamiltonian systems, using various discrete gradients~\cite{Quispel1996, McLachlan1999, Norton2014, Mclaren2004, Mansfield2009}. These efforts have all been based on the simple numerical scheme, 
\begin{align}\label{SgradF}
\frac{\boldsymbol{z}_{n+1} - \boldsymbol{z}_{n}}{\Delta t} &= \boldsymbol{X}_H, \\
&= \bar{J} (\boldsymbol{z}_{n+1}, \boldsymbol{z}_{n}) \overline{\nabla} H (\boldsymbol{z}_{n+1}, \boldsymbol{z}_{n}),\nonumber
\end{align}
where $\bar{J} (\boldsymbol{z}_{n+1}, \boldsymbol{z}_{n})$ is a skew-symmetric matrix and $\overline{\nabla} H (\boldsymbol{z}_{n+1}, \boldsymbol{z}_{n})$ is the discrete gradient of the Hamiltonian, $H$, as described above. 
This formalism can be extended to the Vlasov-Poisson-Landau system through the metriplectic framework. For the collisionless dynamics, it can be shown the first integral preserving properties extend further, preserving the conservation of the moments as well as entropy. Extending the discrete gradient methodology to the Landau collision operator, however, requires some work in reformulating~\eqref{SgradF} for dissipative systems. This alternate formulation must be structure preserving as well.

\subsection{Discrete Gradient Discretization of the Poisson Bracket}

From~\eqref{eq:discPoissonBracket}, we can describe the time-discrete Poisson bracket using the discrete gradient framework as,
\begin{equation}
    \left\{ F,G \right\} \left(\boldsymbol{u}^n,\boldsymbol{u}^{n+1}\right) = \overline{\nabla} F \left(\boldsymbol{u}^n,\boldsymbol{u}^{n+1}\right)^T \bar{J}\left(\boldsymbol{u}^n,\boldsymbol{u}^{n+1}\right) \overline{\nabla} G \left(\boldsymbol{u}^n,\boldsymbol{u}^{n+1}\right),\label{eq:TimeDiscPoisson}
\end{equation}
where $\overline{\nabla} F \left(\boldsymbol{u}^n,\boldsymbol{u}^{n+1}\right)$ and $\overline{\nabla} G \left(\boldsymbol{u}^n,\boldsymbol{u}^{n+1}\right)$ are both discrete gradients of the form~\eqref{DGEq}, and $\bar{J}\left(\boldsymbol{u}^{n},\boldsymbol{u}^{n+1}\right)$ is the skew-symmetric approximation to $\mathbb{J}(\boldsymbol{u})$, introduced in Section~\ref{sec:TemporalDiscretization}. 
Like the spatially discrete form, this time-discrete form of the Poisson bracket also preserves the antisymmetry and consistency of the continuous bracket as desired. From~\eqref{eq:TimeDiscPoisson}, equations of motion in the form of~\eqref{SgradF} can be derived by replacing $F$ and $G$ with the degrees of freedom variable $\boldsymbol{u}$ and the Hamiltonian function, $H$, giving,
\begin{align}\label{eq:TimeDiscPoisson2}
  \dot{\boldsymbol{u}} &= \left\{\boldsymbol{u},H\right\} \nonumber\\
  &= \bar{J} \left(\boldsymbol{u}^n,\boldsymbol{u}^{n+1}\right) \overline{\nabla} H \left(\boldsymbol{u}^n,\boldsymbol{u}^{n+1}\right).
\end{align}
or in standard timestepping form,
\begin{align}\label{eq:DGPoisson}
  \frac{\boldsymbol{u}^{n+1} - \boldsymbol{u}^{n}}{\Delta t} = \bar{J} \left(\boldsymbol{u}^{n},\boldsymbol{u}^{n+1}\right) \overline{\nabla} H \left(\boldsymbol{u}^{n},\boldsymbol{u}^{n+1}\right).
\end{align}
This time-discrete system maintains the Casimir invariants, mass and entropy, of the spatially discrete bracket and the conservative momentum, entropy and regularized entropy. As before, the mass and entropy functions are independent of degrees of freedom, thus $\overline{\nabla} M = \overline{\nabla} S = 0$ and the two invariants remain preserved. 

To see how the momentum is conserved, consider,
\begin{align*}
\left\{ P,H \right\} \left(\boldsymbol{u}^n,\boldsymbol{u}^{n+1}\right) &= \overline{\nabla} P\left(\boldsymbol{u}^n,\boldsymbol{u}^{n+1}\right)^T \bar{J}\left(\boldsymbol{u}^n,\boldsymbol{u}^{n+1}\right) \overline{\nabla} H \left(\boldsymbol{u}^n,\boldsymbol{u}^{n+1}\right),
\end{align*}
where the discrete gradient of $P$ is the constant vector,
\begin{equation*}
    \overline{\nabla} P = \nabla P = \begin{pmatrix} \nabla_{\boldsymbol{X}} P \\ \nabla_{\boldsymbol{V}} P    \end{pmatrix} = \begin{pmatrix} 0 \\ \mathbb{M}p \end{pmatrix}.
\end{equation*}
Then, using the skew-symmetric time-discrete bracket $\bar{J}$ (where we assume the canonical structure $\bar{J}{XV} = \mathbb{M}p^{-1}$), the time discrete bracket becomes:
\begin{align*}
    \left\{ P,H \right\} &= \overline{\nabla} P^T \bar{J} , \overline{\nabla} H \\
    &= \begin{pmatrix} \mathbf{0} & \mathbb{M}p \end{pmatrix} \begin{pmatrix} \mathbf{0} & \mathbb{M}p^{-1} \\ - \mathbb{M}p^{-1} & \mathbf{0} \end{pmatrix} \begin{pmatrix} \overline{\nabla}_{\boldsymbol{X}} H \\ \overline{\nabla}_{\boldsymbol{V}} H \end{pmatrix} \\
    &= \begin{pmatrix} - \mathbf{1}^T & \mathbf{0} \end{pmatrix} \begin{pmatrix} \overline{\nabla}_{\boldsymbol{X}} H \ \overline{\nabla}_{\boldsymbol{V}} H \end{pmatrix} \\
    &= - \mathbf{1}^T \overline{\nabla}_{\boldsymbol{X}} H
\end{align*}
where the second-to-last step is derived from the matrix multiplication $\mathbb{M}_p (- \mathbb{M}_p^{-1}) = - \mathbb{I}$, and $\mathbf{1}^T$ is a row vector of ones. The term $-\mathbf{1}^T \overline{\nabla}_{\boldsymbol{X}} H$ is equivalent to the sum of all position-component gradients of the Hamiltonian. As shown for the semi-discrete bracket, momentum conservation relies on the condition that the total force is zero, i.e. $\sum_p \nabla_{\boldsymbol{x}_p} H = \mathbf{0}$. By the consistency property of the discrete gradient (see Definition~\ref{DGDef}), the sum of discrete gradients of the Hamiltonian reduces to ,
\begin{align*}
\mathbf{1}^T \overline{\nabla}_{\boldsymbol{X}} H &= \sum_p \overline{\nabla}_{\boldsymbol{x}_p} H,\\
&= \sum_p \nabla_{\boldsymbol{x}_p} H + \mathcal{O} \left(||\boldsymbol{u}^{n+1}-\boldsymbol{u}^{n}||\right),\\
&= \mathcal{O} \left(||\boldsymbol{u}^{n+1}-\boldsymbol{u}^{n}||\right).
\end{align*}
Therefore, $\left\{ P,H \right\} \sim \mathbf{0}$ and the momentum is still conserved up to the order of the error in the change in the state vector between steps.

We may similarly show this for regularized entropy, however, it is simpler to show that regularized entropy is conserved using just the consistency property of the discrete gradient. The time-discrete update equation for momentum is given by,
\begin{equation*}
    \frac{S_{\varepsilon}^{n+1} - S_{\varepsilon}^{n}}{\Delta t} = \overline{\nabla} S_{\varepsilon}^T \, \bar{J} \, \overline{\nabla} H .
\end{equation*}
By the consistency property of the discrete gradient this equation reduces to,
\begin{align*}
    \frac{S_{\varepsilon}^{n+1} - S_{\varepsilon}^{n}}{\Delta t} &= \nabla S_{\varepsilon}^T \,\mathbb{J} \, \nabla H  + \mathcal{O} \left( ||\boldsymbol{u}^{n+1} - \boldsymbol{u}^n|| \right),\\
    &= \mathcal{O} \left( ||\boldsymbol{u}^{n+1} - \boldsymbol{u}^n|| \right),
\end{align*}\
as we have already shown $\left\{S_{\varepsilon},H\right\}=0$ for the finite-dimensional bracket. Thus, regularized entropy  is conserved up to order $\mathcal{O} \left( ||\mathbf{u}^{n+1} - \mathbf{u}^n|| \right)$. Lastly, by building the time-discrete framework around the Hamiltonian, $H$, conservation of $H$ is guaranteed through the discrete gradient definition~\cite{Kraus2017,Li2023}.

It is worth noting that this time-discrete system is rather unique in its mass, momentum, total energy, and entropy conserving qualities. In previous kinetic frameworks~\cite{BirdsallLangdon,Esirkepov2001,Markidis2011}, one or more of these qualities is usually sacrificed in the discretization, typically either momentum or energy as mass and entropy conservation are trivial. In this discretization, the correct choice in a finite element basis provides us with the desired momentum consistency, whereas, the use of discrete gradients preserves the antisymmetry of the Poisson bracket, and thus, the conservation of total energy. Furthermore, we have introduced a new form of entropy, regularized entropy, which will be important in future sections, and shown that it is similarly conserved in the chosen framework.  


\subsection{Discrete Gradient Dependent Integrators for the Metric Bracket}\label{sec:DGDI}

A simple temporal discretization of \eqref{FinDimBrack} is introduced by Carrillo et al.~\cite{Carrillo2020}. This time integrator, however, fails to conserve energy or preserve the monotonicity of the entropy. To guarantee this preservation of entropy monotonicity, we follow from the work presented by Hirvijoki in~\cite{Eero2020} and introduce a new timestepping integrator that uses discrete gradients, defined by Gonzalez in~\cite{Gonzalez1996},  
\begin{equation}
\frac{\boldsymbol{v}_{p}^{n+1}-\boldsymbol{v}_{p}^{n}}{\Delta t}=\frac{\nu}{m} \sum_{\bar{p}} w_{\bar{p}} \boldsymbol{1}(p, \bar{p}) \mathbb{Q}\left(\overline{\boldsymbol{\Gamma}_{n}^{n+1}(K, p, \bar{p})}\right) \cdot \overline{\boldsymbol{\Gamma}_{n}^{n+1}\left(S_{\epsilon}, p, \bar{p}\right)},
\label{DGColV}
\end{equation}
where the vector $\overline{\boldsymbol{\Gamma}^{n+1}_{n} (F,p,\overline{p})}$ is,
\begin{equation}
\overline{\boldsymbol{\Gamma}_{n}^{n+1}(F, p, \overline{p})}=\frac{1}{m w_{p}} \overline{\nabla} F (\boldsymbol{v}_{p}^{n},\boldsymbol{v}_{p}^{n+1})-\frac{1}{m w_{\overline{p}}} \overline{\nabla} F (\boldsymbol{v}_{\overline{p}}^{n},\boldsymbol{v}_{\overline{p}}^{n+1}).
\label{GammaBar}
\end{equation}
As in previous sections, we utilize average discrete gradient~\eqref{DGEq}, also used in~\cite{Eero2020}, for $\overline{\nabla} F (v_{p}^{n},v_{p}^{n+1})$, rather than Gonzalez's midpoint discrete gradient. This choice avoids any singularities encountered in slow-moving plasmas (or simulations with short time steps). Further generalizing, the collisional discrete-time evolution of an arbitrary functional is given by,
\begin{equation}
\frac{F^{n+1} - F^{n}}{\Delta t} = \frac{1}{2} \sum_{p,\overline{p}} \overline{\boldsymbol{\Gamma}^{n+1}_{n} (F,p,\overline{p})} \cdot \overline{W^{n+1}_{n} (p,\overline{p})} \cdot \overline{\boldsymbol{\Gamma}^{n+1}_{n} (S_{\epsilon},p,\overline{p})},
\label{DGColDTE}
\end{equation}
where the matrix $\overline{W^{n+1}_{n} (p,\overline{p})}$ is,
\begin{equation}
\overline{W_{n}^{n+1}(p, \overline{p})}=\nu w_{p} w_{\overline{p}} \boldsymbol{1}(p, \overline{p}) \mathbb{Q}\left(\overline{\boldsymbol{\Gamma}_{n}^{n+1}(K, p, \overline{p})}\right).
\label{WDGmat}
\end{equation}
Returning to the discussion of discrete gradients, we now recognize that in~\eqref{DGColDTE}, and specifically in~\eqref{DGColV}, we have derived a new integrator for the metric bracket in similar form to the numerical scheme~\eqref{dHDE}, proposed by Gonzalez. 
This so-called \textit{discrete gradient dependent integrator} (DGDI) is not itself a discrete gradient, as it does not pass the directionality test, but relies on the properties of the discrete gradient~\eqref{DGEq} (see Definition~\eqref{DGDef}) to be monotonic. It is, therefore, worthwhile to prove that our new integrator remains consistent with the finite-dimensional form of the collisional evolution~\eqref{ColEvo} and to show that the integrator conserves the moments and preserves the monotonicity of entropy.

\begin{theorem}
The collisional discrete-time evolution equation, given by~\eqref{DGColDTE}, is a consistent approximation to the finite-dimensional bracket~\eqref{FinDimBrack}. \\
\begin{equation}
\left. \frac{d F}{d t}\right|_{\text {coll }} \approx \frac{F^{n+1} - F^{n}}{\Delta t},
\end{equation}
in the limit where $||\boldsymbol{v}_{p}^{n+1} - \boldsymbol{v}_{p}^{n}|| \rightarrow 0$ and $||\boldsymbol{v}_{\overline{p}}^{n+1} - \boldsymbol{v}_{\overline{p}}^{n}|| \rightarrow 0$.
\end{theorem}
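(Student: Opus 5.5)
The plan is a direct term-by-term comparison: take the right-hand side of \eqref{DGColDTE}, expand every discrete-gradient quantity with the consistency property of Definition~\ref{DGDef}, and show it reduces to the finite-dimensional bracket \eqref{FinDimBrack}, $(F,S_\epsilon)_h$, up to an error that vanishes as $\|\boldsymbol{v}_p^{n+1}-\boldsymbol{v}_p^n\|\to 0$ for all $p$. Write $\boldsymbol{v}_p^{n+1/2}=\tfrac12(\boldsymbol{v}_p^n+\boldsymbol{v}_p^{n+1})$ and $\delta=\max_p\|\boldsymbol{v}_p^{n+1}-\boldsymbol{v}_p^n\|$.

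\textbf{Step 1: the left-hand side.} Since $F$ depends only on the velocities, the directionality property of the average discrete gradient \eqref{DGEq} gives
\[
\frac{F^{n+1}-F^n}{\Delta t}=\overline{\nabla}F\cdot\frac{\boldsymbol{V}^{n+1}-\boldsymbol{V}^n}{\Delta t}.
\]
Since $F$ is smooth, this is also $\nabla F(\boldsymbol{V}^{n+1/2})\cdot(\boldsymbol{V}^{n+1}-\boldsymbol{V}^n)/\Delta t+\mathcal{O}(\delta)$. This identifies the left side as a consistent approximation of $dF/dt$ evaluated at the midpoint.

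\textbf{Step 2: the $\boldsymbol{\Gamma}$ factors.} By consistency, $\overline{\nabla}F(\boldsymbol{v}_p^n,\boldsymbol{v}_p^{n+1})=\partial F/\partial\boldsymbol{v}_p|_{n+1/2}+\mathcal{O}(\delta)$, and the same holds for $S_\epsilon$. Because \eqref{GammaBar} is linear in these gradients with fixed coefficients $1/(mw_p)$, we get
\[
\overline{\boldsymbol{\Gamma}_n^{n+1}(F,p,\bar p)}=\boldsymbol{\Gamma}(F,p,\bar p)|_{n+1/2}+\mathcal{O}(\delta),
\]
and likewise for $S_\epsilon$. For $K$ the average discrete gradient is exact: $\overline{\nabla}K=m w_p\boldsymbol{v}_p^{n+1/2}$. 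Hence $\overline{\boldsymbol{\Gamma}_n^{n+1}(K,p,\bar p)}=\boldsymbol{v}_p^{n+1/2}-\boldsymbol{v}_{\bar p}^{n+1/2}$ exactly, and $\overline{W_n^{n+1}}$ in \eqref{WDGmat} equals $\mathbb{W}(p,\bar p)$ evaluated at the midpoint velocities.

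\textbf{Step 3: assembly.} Substitute the expansions from Step~2 into \eqref{DGColDTE}. The sum is finite and bilinear in the $\boldsymbol{\Gamma}$'s, so
\[
\frac{F^{n+1}-F^n}{\Delta t}=(F,S_\epsilon)_h\big|_{n+1/2}+\mathcal{O}(\delta).
\]
Smoothness of the bracket in $\boldsymbol{V}$ then lets us replace the midpoint evaluation by evaluation at $t^n$ or $t^{n+1}$ at the cost of another $\mathcal{O}(\delta)$. This gives $\left.dF/dt\right|_{\text{coll}}=(F,S_\epsilon)_h$ in the limit, which is the claim.

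\textbf{Main obstacle.} The hard step is making the $\mathcal{O}(\delta)$ terms uniform, because $\mathbb{Q}(\boldsymbol{\xi})$ is singular at $\boldsymbol{\xi}=0$. The error terms contain $\mathbb{Q}$ multiplied by the $\mathcal{O}(\delta)$ corrections, so they are controlled only while $\mathbb{Q}$ stays bounded. I would first handle this by assuming the relative velocities $\boldsymbol{v}_p^{n+1/2}-\boldsymbol{v}_{\bar p}^{n+1/2}$ stay bounded away from zero for interacting pairs, so that $\mathbb{Q}$ is smooth, and hence locally Lipschitz, on a neighbourhood of the trajectory. If that assumption is too restrictive, the fallback is to note that $|\boldsymbol{\xi}|\,\mathbb{Q}(\boldsymbol{\xi})$ is bounded, and so avoid needing a Lipschitz estimate on $\mathbb{Q}$ itself. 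The second ingredient is that $S_\epsilon$ is smooth (indeed $C^\infty$) for the Gaussian mollifier, since $h>0$ everywhere; this justifies applying the consistency property to $S_\epsilon$ in Step~2.
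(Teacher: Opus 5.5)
Your proposal is correct and takes essentially the same route as the paper: expand each $\overline{\boldsymbol{\Gamma}_{n}^{n+1}}$ using the consistency property of the discrete gradient, identify $\overline{W_{n}^{n+1}(p,\overline{p})}$ with $\mathbb{W}(p,\overline{p})$ via $\boldsymbol{\Gamma}(K,p,\overline{p})=\boldsymbol{v}_p-\boldsymbol{v}_{\overline{p}}$, and reassemble the finite bilinear sum into $(F,S_\epsilon)_h$. Your refinements go beyond what the paper's ``$\approx$'' gives and are sound: the average discrete gradient is exact for the quadratic $K$, so $\overline{W_{n}^{n+1}}$ is exactly $\mathbb{W}$ at the midpoint velocities, and you handle the singularity of $\mathbb{Q}$ explicitly. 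Two small notes on that singularity: your main assumption holds automatically for small $\delta$ at any state where the bracket is defined, since the midpoint relative velocity tends to $\boldsymbol{v}_p^n-\boldsymbol{v}_{\overline{p}}^n\neq 0$; the fallback via boundedness of $|\boldsymbol{\xi}|\,\mathbb{Q}(\boldsymbol{\xi})$, however, would not by itself give uniform $\mathcal{O}(\delta)$ bounds, because $\mathbb{Q}(\boldsymbol{\xi})\cdot\boldsymbol{\Gamma}(S_\epsilon,p,\overline{p})$ is bounded but depends on the direction of approach as $\boldsymbol{\xi}\to 0$.
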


\begin{proof}
First we consider what happens to~\eqref{GammaBar} as $||\boldsymbol{v}_{p}^{n+1} - \boldsymbol{v}_{p}^{n}||$ and $||\boldsymbol{v}_{\overline{p}}^{n+1} - \boldsymbol{v}_{\overline{p}}^{n}||$ approach zero, 
\begin{align*}
\overline{\boldsymbol{\Gamma}_{n}^{n+1}(F, p, \overline{p})}=&\frac{1}{m w_{p}} \overline{\nabla} F (\boldsymbol{v}_{p}^{n},\boldsymbol{v}_{p}^{n+1})-\frac{1}{m w_{\overline{p}}} \overline{\nabla} F (\boldsymbol{v}_{\overline{p}}^{n},\boldsymbol{v}_{\overline{p}}^{n+1})\\
= &\frac{1}{m w_{p}} \left( \nabla F \left( \boldsymbol{v}_{p}^{n+1/2} \right) + \mathcal{O} (||\boldsymbol{v}_{p}^{n+1} - \boldsymbol{v}_{p}^{n}||) \right) \\
&- \frac{1}{m w_{\overline{p}}} \left( \nabla F \left( \boldsymbol{v}_{\overline{p}}^{n+1/2} \right) + \mathcal{O} (||\boldsymbol{v}_{\overline{p}}^{n+1} - \boldsymbol{v}_{\overline{p}}^{n}||) \right)\\
\approx &\frac{1}{m w_{p}} \nabla F \left( \boldsymbol{v}_{p}^{n+1/2} \right) - \frac{1}{m w_{\overline{p}}} \nabla F \left( \boldsymbol{v}_{\overline{p}}^{n+1/2} \right)\\
\approx & \boldsymbol{\Gamma} (F, p, \overline{p})
\end{align*}
where $\boldsymbol{v}_p^{n+1/2}=\frac{1}{2}\left(\boldsymbol{v}_p^{n+1}+\boldsymbol{v}_p^{n}\right)$ and likewise for $\boldsymbol{v}_{\bar{p}}^{n+1/2}$. Furthermore, we recall the consistency property in Definition~\eqref{DGDef} to obtain the midpoint approximation, $\nabla F(\boldsymbol{v}_p^{n+1/2})$, of the discrete gradient, $\overline{\nabla} F (\boldsymbol{v}_{p}^{n},\boldsymbol{v}_{p}^{n+1})$. Plugging this into relation into~\eqref{DGColDTE} gives,
\begin{align*}
\frac{F^{n+1} - F^{n}}{\Delta t} &= \frac{1}{2} \sum_{p,\overline{p}} \overline{\boldsymbol{\Gamma}^{n+1}_{n} (F,p,\overline{p})} \cdot \overline{W^{n+1}_{n} (p,\overline{p})} \cdot \overline{\boldsymbol{\Gamma}^{n+1}_{n} (S_{\epsilon},p,\overline{p})} \\
&\approx \frac{1}{2} \sum_{p,\overline{p}} \boldsymbol{\Gamma}^{n+1}_{n} (F,p,\overline{p}) \cdot \left[ \nu w_{p} w_{\overline{p}} \boldsymbol{1}(p, \overline{p}) \mathbb{Q}\left(\boldsymbol{\Gamma}_{n}^{n+1}(K, p, \overline{p})\right) \right] \cdot \boldsymbol{\Gamma}^{n+1}_{n} (S_{\epsilon},p,\overline{p})
\end{align*}
Realizing $\boldsymbol{\Gamma}(K,p,\overline{p}) = \boldsymbol{v}_p - v_{\overline{p}}$, we can therefore show that,
\begin{align*}
\frac{F^{n+1} - F^{n}}{\Delta t} &\approx \frac{1}{2} \sum_{p,\overline{p}} \boldsymbol{\Gamma}^{n+1}_{n} (F,p,\overline{p}) \cdot \left[ \nu w_{p} w_{\overline{p}} \boldsymbol{1}(p, \overline{p}) \mathbb{Q}\left( \boldsymbol{v}_p - v_{\overline{p}}\right) \right] \cdot \boldsymbol{\Gamma}^{n+1}_{n} (S_{\epsilon},p,\overline{p}) \\
&\approx \frac{1}{2} \sum_{p,\overline{p}} \boldsymbol{\Gamma}^{n+1}_{n} (F,p,\overline{p}) \cdot \mathbb{W}(p, \overline{p}) \cdot \boldsymbol{\Gamma}^{n+1}_{n} (S_{\epsilon},p,\overline{p})\\
&\approx  \left. \frac{d F}{d t}\right|_{\text {coll }}
\end{align*}
\end{proof}

Proving conservation of the moments follows similarly from the conservation proofs in previous sections. Conservation of mass is trivial to show for the DGDI as the discrete mass function is a function of the weights alone. Thus $\boldsymbol{\Gamma}^{n+1}_{n}(M^n, p, \overline{p}) = 0$ and the conservation of~\eqref{DGColDTE} follows. Conservation of momentum again follows from the condition $\overline{\boldsymbol{\Gamma}^{n+1}_{n}(P, p, \overline{p})} = \mathbb{I} - \mathbb{I} = 0$,
\begin{align*}
\frac{P^{n+1}-P^{n}}{\Delta t} &= \frac{\nu}{2} \sum_{p, \overline{p}} \overline{\boldsymbol{\Gamma}_{n}^{n+1}(P, p, \overline{p})} \cdot \overline{W_{n}^{n+1}(p, \overline{p})} \cdot \overline{\boldsymbol{\Gamma}_{n}^{n+1}\left(S_{\epsilon}, p, \overline{p}\right)}, \\
&= \frac{\nu}{2} \sum_{p, \overline{p}} (\mathbb{I} - \mathbb{I}) \cdot \overline{W_{n}^{n+1}(p, \overline{p})} \cdot \overline{\boldsymbol{\Gamma}_{n}^{n+1}\left(S_{\epsilon}, p, \overline{p}\right)}, \\
&= 0.
\end{align*}
Conservation of kinetic energy is dependent on the fact that $\mathbb{Q}(\boldsymbol{\xi})$ has the null eigenvector $\boldsymbol{\xi}$, thus $\overline{\boldsymbol{\Gamma}_{n}^{n+1}(K^n, p, \overline{p})} \cdot \mathbb{Q}(\overline{\boldsymbol{\Gamma}_{n}^{n+1}(K^n, p, \overline{p})}) = 0$. Therefore, the discrete-time collisional rate-of-change of kinetic energy follows as such,
\begin{align*}
\frac{K^{n+1}-K^{n}}{\Delta t}&=\frac{\nu}{2} \sum_{p, \overline{p}} \overline{\boldsymbol{\Gamma}_{n}^{n+1}(K^n, p, \overline{p})} \cdot \overline{W_{n}^{n+1}(p, \overline{p})} \cdot \overline{\boldsymbol{\Gamma}_{n}^{n+1}\left(S_{\epsilon}, p, \overline{p}\right)} \\
&=\frac{\nu}{2} \sum_{p, \overline{p}} \overline{\boldsymbol{\Gamma}_{n}^{n+1}(K^n, p, \overline{p})} \cdot \left[ \nu w_{p} w_{\overline{p}} \boldsymbol{1}(p, \overline{p}) \mathbb{Q}\left(\overline{\boldsymbol{\Gamma}_{n}^{n+1}(K^n, p, \overline{p})}\right) \right] \cdot \overline{\boldsymbol{\Gamma}_{n}^{n+1}\left(S_{\epsilon}, p, \overline{p}\right)} \\
&= 0
\end{align*}
Lastly, we consider the preservation of entropy monotonicity. As a result of $\overline{W_{n}^{n+1}(p, \overline{p})}$ being a positive semidefinite matrix, the discrete-time entropy behaves as desired,
\begin{align*}
\frac{S_{\epsilon}^{n+1}-S_{\epsilon}^{n}}{\Delta t}&=\frac{\nu}{2} \sum_{p, \overline{p}} \overline{\boldsymbol{\Gamma}_{n}^{n+1}(S_{\epsilon}, p, \overline{p})} \cdot \overline{W_{n}^{n+1}(p, \overline{p})} \cdot \overline{\boldsymbol{\Gamma}_{n}^{n+1}\left(S_{\epsilon}, p, \overline{p}\right)}, \\
&\geq 0.
\end{align*}

We now have a fully time discrete metric bracket that is preserves all the moments and entropy monotonicity. In the next section, we will consider a numerical implementations of the derived discretized Poisson and metric brackets and test them on classic plasma physics tests.


\section{\label{sec:NumImp}Numerical Implementation}

The numerical implementation presented in this section is based on the \texttt{PETSc} Particle-In-Cell (\texttt{PETSc-PIC}) framework~\cite{Finn2023}. The \texttt{PETSc-PIC} framework provides scalable and structure-preserving algorithms for both the Poisson and metric brackets. The goal of \texttt{PETSc} is to provide composable pieces from which optimal simulations can be constructed. \texttt{PETSc} user level APIs allow applications to delay implementation choices, such as solver details, until runtime using dynamic configuration~\cite{BrownKnepleySmith14}. Recent advances in the \texttt{PETSc-PIC} framework have included conservative projections between the finite element and particle bases~\cite{Pusztay2022}, as well as a conservative remapping step for reducing stochastic noise growth in long time simulations~\cite{Adams2025}. For this work, the remapping phase was ignored as its effects on the microscale physics of the kinetics system are an active area of study (see Section 8 of~\cite{Adams2025}). The implementations for the collisionless Vlasov-Poisson and collisional Landau systems each possess unique conservative qualities that will be discussed in more detail in Section~\ref{sec:collisionless} and Section~\ref{sec:collisional}.


\subsection{Solver Considerations}\label{sec:SolverConsiderations}
In previous sections, the structure-preserving properties of both the Poisson and metric brackets have been discussed in detail. As an implicit timestepping method, the discrete gradient integrator requires the implementation to construct and solve a nonlinear system of equations at each time step. The complexity of the problem means that care is needed in the choice of a solver for the system. The default choice for nonlinear solves in the \texttt{PETSc} library is a standard Newton-Raphson method with line search. However, the standard Newton-Raphson method requires a user-defined Jacobian matrix, which in both the Vlasov-Poisson and Landau systems is difficult to derive. Previous works~\cite{Hu2024} have avoided this task by utilizing fixed point iteration solvers, which don't require Jacobian matrices. While effective, these fixed point methods are linear in convergence, and thus very slow to a solution, particularly with larger time step sizes. 

An alternative to fixed point iteration methods are quasi-Newton methods, which use an approximation of the Jacobian, starting with a positive-definite identity matrix and updated each iteration. Quasi-Newton methods are advantageous in large-scale, kinetic frameworks as they are superlinear, under the right conditions, and therefore require far less iterations to achieve nonlinear convergence. While each iteration is on average more expensive than a fixed point iteration, the efficiency increase is still apparent. Moreover, a limited memory Broyden-Fletcher-Goldfarb-Shanno (L-BFGS)~\cite{Broyden1970,Nocedal1980} method may be utilized  to reduce memory usage of the algorithm, further improving efficiency.

The use of these inexact nonlinear solvers, however, comes at the cost of structure-preservation. Principally, two crucial pillars of the preserved structure in the spatially- and time-discrete bracket are lost when using quasi-Newton methods. The first is the exact satisfaction of the discrete chain rule, i.e. the directionality condition in the discrete gradient definition~\ref{DGDef}. If we consider, for the collisionless case, that the nonlinear solver only \textit{approximately} enforces the nonlinear residual,
\begin{equation}
    r = \boldsymbol{u}^{n+1} - \boldsymbol{u}^{n} - \Delta t \bar{J} \, \overline{\nabla} H = 0
\end{equation}
then the directionality condition no longer holds exactly. Even for small residuals, a drift on the order $O(||r||)$ is introduced to the total energy, $H$, as,
\begin{align*}
    H(\boldsymbol{u}^{n+1}) - H(\boldsymbol{u}^{n}) &\approx \overline{\nabla} H \cdot \left(\boldsymbol{u}^{n+1} - \boldsymbol{u}^{n}\right), \\
    &=\Delta t \overline{\nabla} H \bar{J} \, \overline{\nabla} H + \overline{\nabla} H^T r,\\
    &=\overline{\nabla} H^T r.
\end{align*}
In general for any conserved quantity, $A$, $A(\boldsymbol{u}^{n+1}) - A(\boldsymbol{u}^{n}) = \overline{\nabla} A^T r$, and thus we expect some drift controlled by the tolerance levels of the solver.

The second lost pillar of structure-preservation is the exact skew-symmetry of the Jacobian. A true Newton-Raphson method would assemble the Jacobian matrix, which because of the properties of $\bar{J}$ and $\overline{\nabla} H$, would preserve the antisymmetry of the system, and consequently the conservative properties. However, as discussed the L-BFGS algorithm builds a positive-definite low-rank approximation to the Jacobian which does not retain the principle qualities of the true Jacobian required to maintain the antisymmetry of the system.

Fixed point iteration methods are also subject to similar conservation breaking issues as they do not preserve the geometry of the discretized brackets. The accuracy and conservative properties of both fixed point and quasi-Newton methods can ideally be controlled with solver tolerances, however, this remains to be studied in depth for this problem. We will briefly explore here the effect of tolerance level on structure preservation using the collisionless tests, as they are faster and more efficient than the Landau collision tests and therefore a better testing ground for this. 

\subsection{Collisionless Dynamics}\label{sec:collisionless}
We first consider an implementation of discrete gradient integrators for the collisionless Vlasov-Poisson system. We use the collisionless system not only as a test of the structure-preserving qualities of our discretization, but as a useful test of nonlinear solvers. As discussed in Section~\ref{sec:PoissonBracket}, in an ideal case, we would expect the system's mass, momentum, energy and entropy to be conserved as they are in the time-discrete form (with some $\mathcal{O}(||\Delta u||)$ error for momentum and regularized entropy). However, we must make compromises in the selection of the nonlinear solver for computational efficiency which sacrifices some of the structure-preserving aspects of our formulation.
The collisionless Vlasov-Poisson system may be formulated using discrete gradients where the skew-symmetric matrix is,
\begin{equation}
\bar{J} = \begin{pmatrix} 0 & 1 \\ -1 & 0\end{pmatrix}
\end{equation}
and the Hamiltonian function is as defined in~\eqref{eq:DiscHamiltonian}.

We test the collisionless algorithm in one dimension (1X-1V) on the standard plasma test, Landau damping. In this test, an initial perturbation in the electric field is damped out of the particle system without collisions at a calculable rate. A detailed overview of the analytic derivation of the electric field oscillation frequency and damping rate can be found in~\cite{Finn2023}. We will use the same system parameters and, thus, expect an electric field oscillation frequency and damping rate of $\omega_r=1.416$ and $\gamma=-0.153$, respectively. The plasma particles are given an initial weight distribution of,
\begin{gather}
  f\left(x,v,t=0\right) = \frac{1}{\sqrt{2\pi v_{th}^2}} e^{-v^2/2 v_{th}^2} \left(1 + \alpha \cos \left(kx\right)\right), \\
  \left(x,v\right) = \left[0,2\pi/k\right] \times \left[-v_{max},v_{max}\right], \nonumber
\end{gather}
where $v_{th} = 1$, $\alpha=0.01$, $k=0.5$, $v_{max}=10$ and the boundaries are periodic. For these tests, we choose a grid consisting of 80 spatial cells and 6,000 particles per cell. This is a slight decrease in resolution from previous work, however, the introduction of the regularized entropy monitor, which is $\mathcal{O}(N^2)$, necessitates the use of smaller resolutions for reasonable simulation run times. We use a PIC time step size of $\Delta t=0.01$, which in previous convergence work was found to be ideal for each time stepper. For this first test, the relative tolerance for the L-BFGS solver was set to its lowest level, $\mathcal{O} (10^{-11})$, ensuring maximum structure-preservation. The symplectic and Runge-Kutta integrators are explicit, and thus avoid the same pitfalls introduced by nonlinear solvers. However, these explicit methods introduce their own errors and break structure-preservation in the discretization phase. The Runge-Kutta integrator is included here as a comparison for a non-conservative method. 
Symplectic integrators preserve the symplectic structure of the system and the associated Noether invariants arising from continuous symmetries, while Casimir invariants are conserved independently due to degeneracies in the Poisson bracket. Furthermore, by preserving the symplectic two-form, these integrators ensure that the energy and momentum errors remain bounded over long-time simulations, even though these quantities are not exactly conserved at each time step.
These differences present tradeoffs we explore in the tests.
\begin{figure}[!h]
  \begin{center}
    \includegraphics[width=0.9\columnwidth]{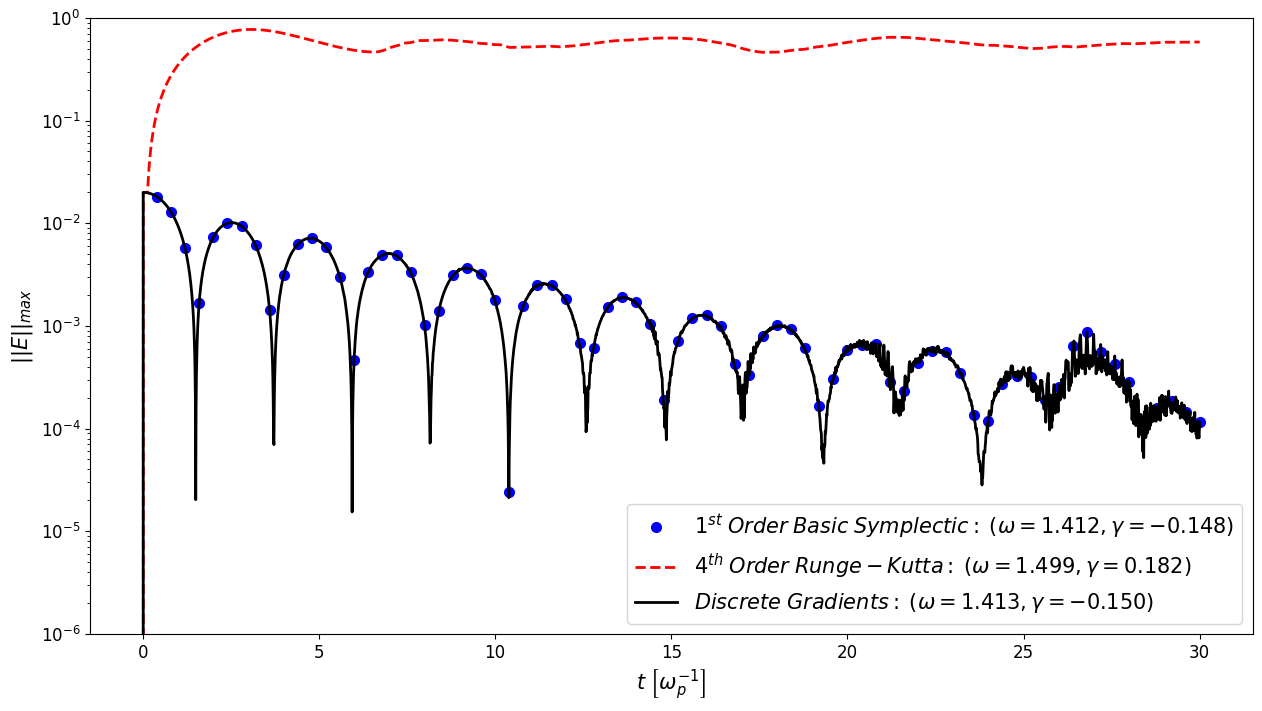}
  \end{center}
  \caption{The max-norm of the electric field as a function of time for three different timestepping integrators: discrete gradient integrators, 1st-order symplectic, and 4th-order Runge-Kutta. For this test, the relative tolerance for the discrete gradient methods was set to $10^{-11}$.}
  \label{fig:LD_TSComp}
\end{figure}

\begin{figure}[!h]
  \begin{center}
    \includegraphics[width=0.9\columnwidth]{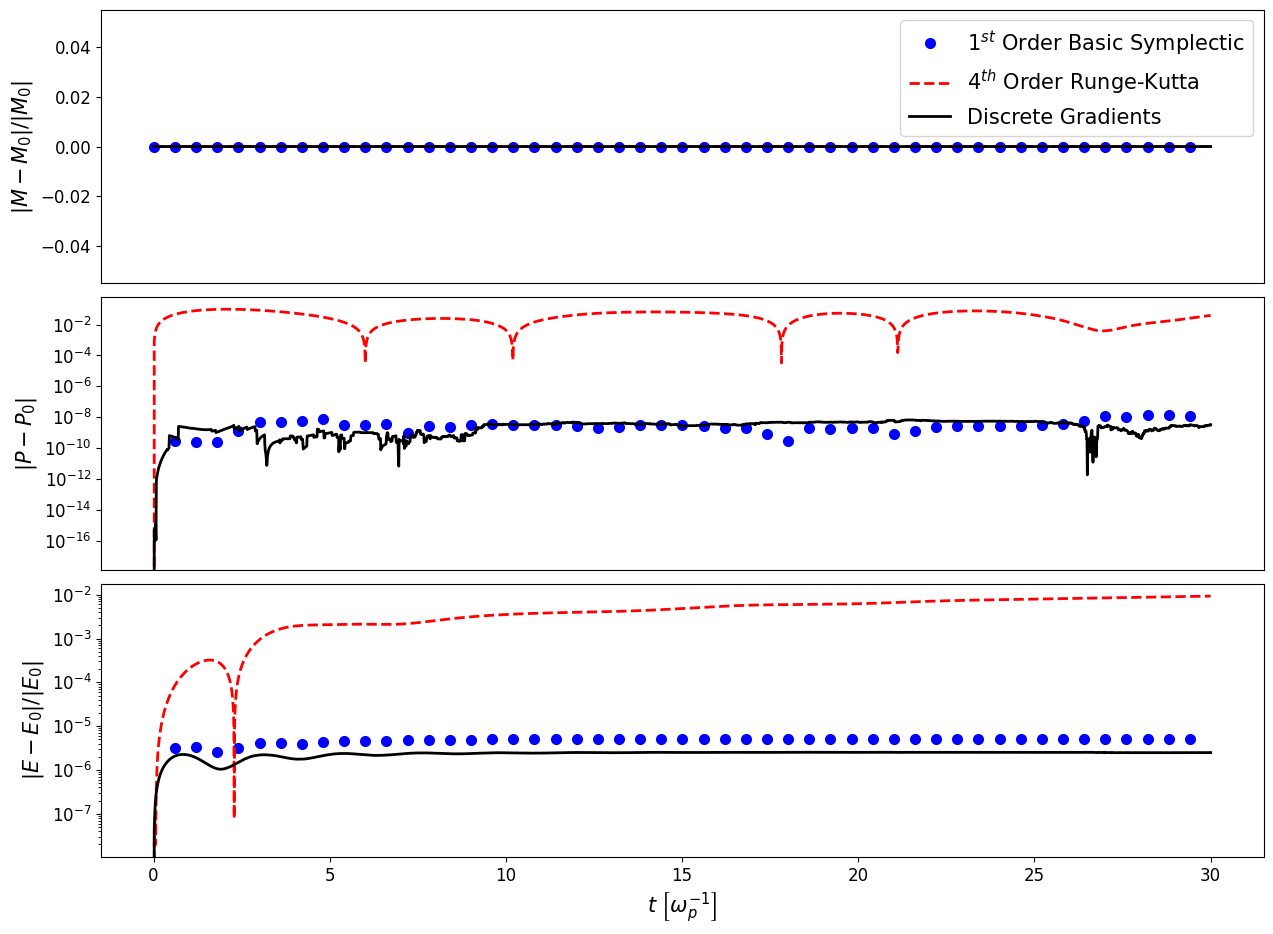}
  \end{center}
  \caption{The moments, mass, momentum and energy, as a function of time for three different timestepping integrators. Absolute error is used for the momentum plot as the true value of momentum is zero, and therefore, the relative error is non-convergent.}
  \label{fig:LD_MomComp}
\end{figure}

Integrals appearing in the average discrete gradient~\eqref{DGEq} are evaluated using standard tensor-product Gauss-Legendre quadratures. Gauss-Legendre quadratures with $m$ nodes are exact for all polynomials up to degree $2m-1$. As both kinetic and field components of the Hamiltonian function~\eqref{eq:DiscHamiltonian} are quadratic, a 2-point Gauss-Legendre rule is sufficient to ensure exact calculation of the discrete gradient and retention of it's structure preserving qualities. In contrast, the regularized entropy function~\eqref{eq:DiscRegEntropy} is evaluated using a Gauss-Hermite quadrature, which are designed for functions of the form $\int f(z) e^{-z^2} \mathrm{d} \boldsymbol{z}$. This form is not exactly like~\eqref{eq:DiscRegEntropy}, however, in tests it has proven accurate for the purpose of this application.

Figure~\ref{fig:LD_TSComp} shows the maximum value of the electric field, $E_{max} = \max_{\Omega} \|E\|$, over time. The values for $\gamma$ and $\omega_r$ were measured by fitting the max norm of the exact equation for the electric field evolution in time to the data, given by,
\begin{equation}
  E_{exact} \left(x,t\right) = 4 A r e^{\gamma_{exact} t} \sin \left(k x\right) \cos \left(\omega_{r,exact} t - \phi_0\right),
\end{equation}
where $\left(A,r,\phi_0\right)=\left(0.01,0.3677,0.536245\right)$. We ignore the peaks at $t = 0\left[\omega_p^{-1}\right]$ and past $t = 25\left[\omega_p^{-1}\right]$ as they may represent complex frequency roots other than the desired \textit{Landau root} or data corrupted by particle noise. The symplectic and discrete gradient integrators accurately capture the damping behavior of the electric field with a slight improvement in accuracy when using the discrete gradient integrator ($error_{DG} = (0.212\%,1.961\%)$ and $error_{BSI} = (0.282\%,2.614\%)$). As expected, the non-conservative Runge-Kutta integrator fails to capture the electric field damping at any level.
\begin{figure}[h!]
  \begin{center}
    \includegraphics[width=0.9\columnwidth]{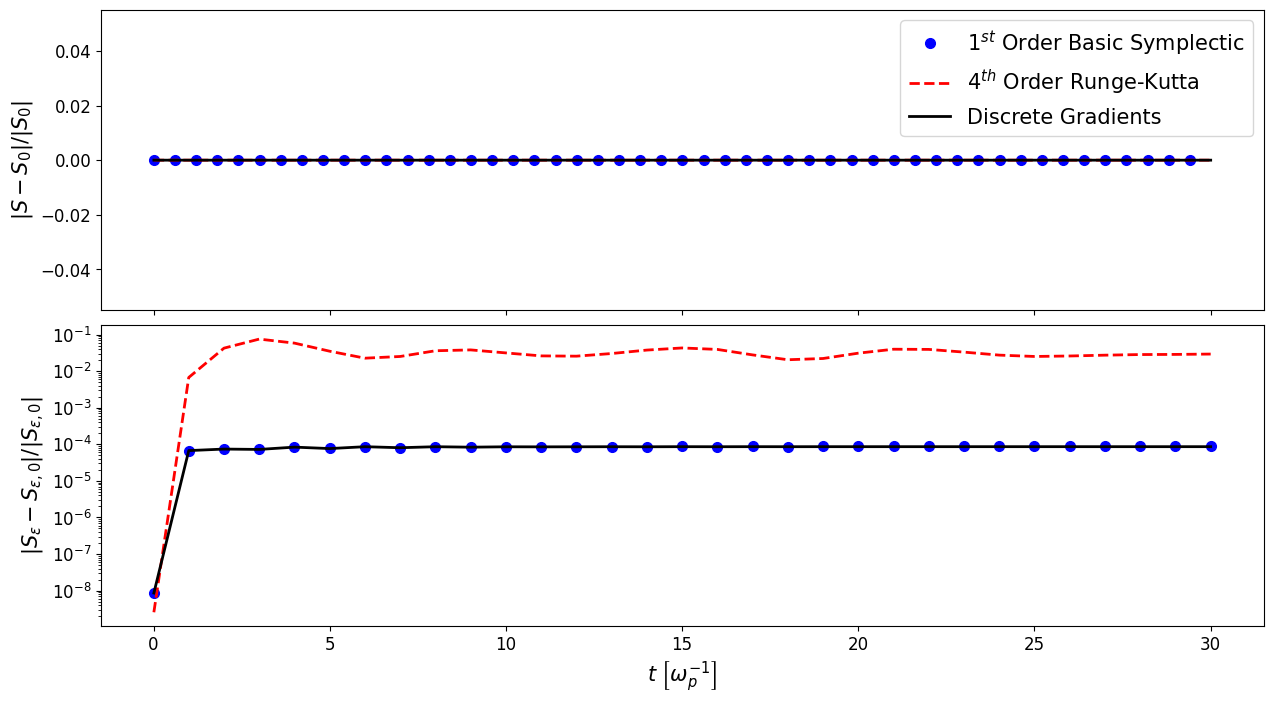}
  \end{center}
  \caption{Entropy and regularized entropy for each of the three different timestepping integrators. Regularized entropy is only computed every 100 steps as it is extremely computationally expensive ($\mathcal{O} (N^2)$).}
  \label{fig:LD_Entropy}
\end{figure}
\begin{figure}[h!]
  \begin{center}
    \includegraphics[width=0.9\columnwidth]{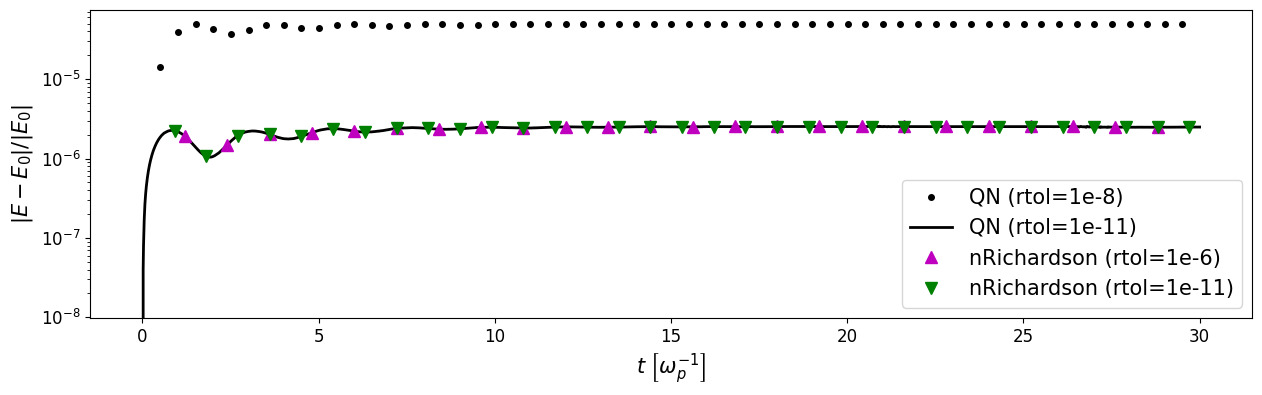}
  \end{center}
  \caption{A comparison of energy conservation for two nonlinear solver types: quasi-Newton (L-BFGS) and nonlinear Richardson (fixed-point). Each solver was tested with two relative tolerance levels. The nonlinear Richardson solver was given a larger tolerance difference as the results showed no change with the original tolerance difference.}
  \label{fig:LD_SolvComp}
\end{figure}

The error in the moments, mass, momentum and energy, over time (Figure~\ref{fig:LD_MomComp}), reveals that all three timestepping methods are exactly mass conservative, while only the symplectic and discrete gradient integrators achieve relatively low levels of momentum and total energy conservation, remaining bounded over time. Mass conservation is relatively trivial as the number of particles and total weight remains constant throughout the simulation, however, with the mapping stages between particle and finite element spaces, it is worthwhile to confirm this property. Furthermore, each timestepping method exactly conserves the entropy of the system while the regularized entropy is only kept stable at $\sim \mathcal{O} (10^{-4})$ by the discrete gradient integrators and symplectic methods (Figure~\ref{fig:LD_Entropy}). In an ideal, continuous model, we expect the discrete gradient integrator to achieve full conservation of all the moments, as well as entropy, to machine precision, and in fact we do see some improvement over the symplectic methods. However, as expected, limitations in the L-BFGS algorithm have caused a loss of conservation of energy, momentum and regularized entropy. The regularized entropy has an $\mathcal{O} (N^2)$ dependence on the particles, and thus may additionally be degraded by the introduction of any particle noise.

In Figure~\ref{fig:LD_SolvComp}, the L-BFGS method is compared to a classic fixed point iteration solver (nonlinear Richardson), each at two different relative solver tolerance levels. As shown in the figure, reducing the tolerance of the L-BFGS solver improves the conservative nature of the algorithm but does not have an effect on the fixed point iterator. The fixed point method was further tested with a larger tolerance difference than the L-BFGS solver with no change in results. It is unknown why the fixed point method showed no improvement when tolerances were lowered but this is an interesting topic for future studies. A third test with a relative tolerance level of $\mathcal{O} (10^{-14})$ was attempted for the L-BFGS solver, however, the solver stalled out at a relative norm of $~\mathcal{O} (10^{-13})$, eventually reaching the maximum iteration limit and failing. It is likely that the Jacobian for the L-BFGS method is no longer accurate enough past this point and convergence is breaking down. This phenomenon will also be studied in future work.

We can conclude from these tests that the new \texttt{PETSc}-based discrete gradient integrator is as accurate and as conservative as existing explicit symplectic methods in the \texttt{PETSc-PIC} framework. Additionally, when compared to previous implementations~\cite{Kraus2017_GEMPIC,Hu2024}, this new framework either matches or outperforms the accuracy, structure-preserving quality and computational efficiency of these previous implementations. As discussed, structure-preserving properties of the system can be improved when using discrete gradient integrators by reducing the relative tolerance of the L-BFGS solver, but this is likely limited by the solver's ability to continue making incremental improvements to the Jacobian matrix. Without a true Jacobian for the Vlasov-Poisson system derived, or a more advanced nonlinear solver, the L-BFGS solver paired with the discrete gradient integrator requires more time to converge than the symplectic methods, with only a small improvement in overall accuracy and structure preservation. Future work must, therefore, focus on introducing new nonlinear solvers to achieve lower convergence levels needed for accuracy and structure preservation. It is also worth noting that the implicit formulation of discrete gradient integrators also allows the time stepper to take larger stable time steps than the explicit methods.

\subsection{Collisional Dynamics}\label{sec:collisional}
To test the DGDI on collisional dynamics, we apply the particle basis Landau collision operator to a velocity space distribution of particles of two species in a test of equilibration. This tests the ability of the implementation to relax two different Maxwellian distributions to an equilibrium state. While this test is non-exhaustive, it provides verification of the structure-preserving qualities of the operator and integrator as well as the characteristic Maxwellian steady state of the operator. We also run the same test case using a simple Euler time integration technique, mimicking the methodology used by Carrillo in~\cite{Carrillo2020} to compare the structure preserving quality of each method. A more in depth discussion of the collision operator implementation in the \texttt{PETSc} library can be found in \cite{pusztay2023landaucollisionintegralparticle}, in preparation for submission at the time of writing, as well as \cite{Zonta2022, Hirvijoki2021, Carrillo2020}.
\begin{figure}[!h]
  \begin{center}
    \includegraphics[width=0.9\columnwidth]{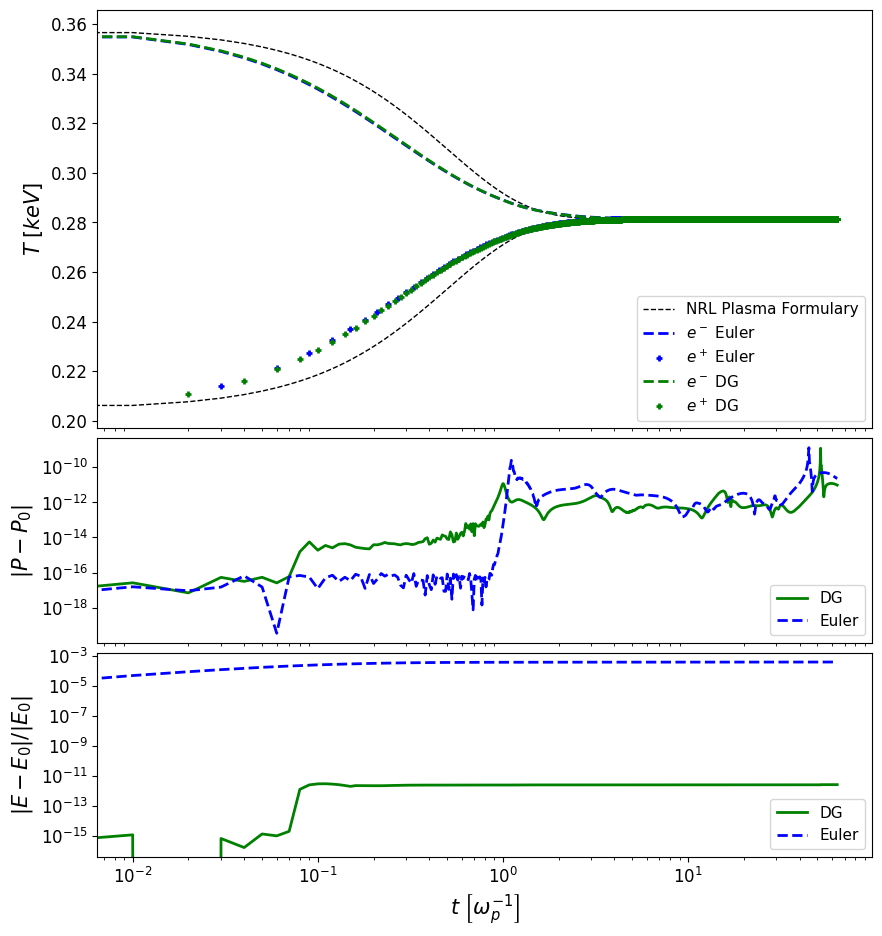}
  \end{center}
  \caption{(Top) Electron-positron equilibration using a non-conservative Euler and the average discrete gradient time integrator for the particle basis Landau collision integral compared with analytical solution. (Middle) Absolute error in momentum for electron-positron equilibration for each time integrator.(Bottom) Relative error in total energy for electron-positron equilibration for each time integrator.}
  \label{fig:equilibration}
\end{figure}
\begin{figure}[!h]
  \begin{center}
    \includegraphics[width=0.9\columnwidth]{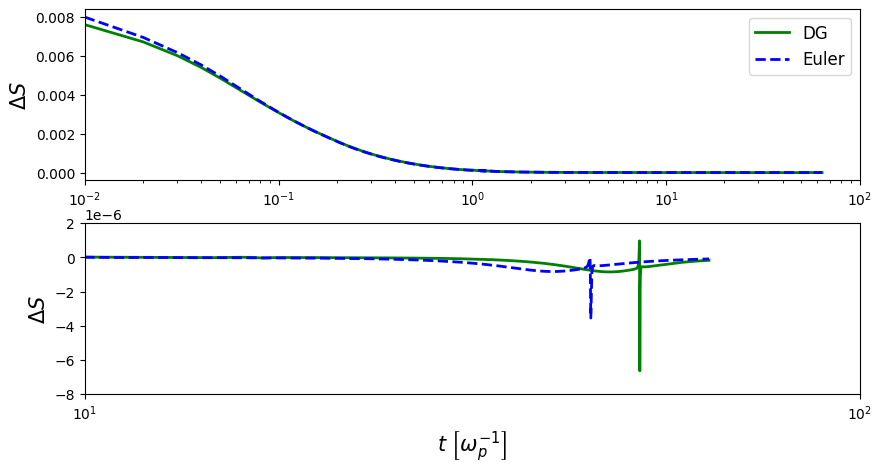}
  \end{center}
  \caption{(Top) Entropy change between time steps for the DGDI and Euler integrator. (Bottom) The same entropy change plot zoomed in on non-monontonic regions for each integrator.}
  \label{fig:equilibration_error}
\end{figure}

For simplicity, we choose electrons and positrons for our two species with initial temperatures of $0.35\left[KeV\right]$ and $0.2\left[KeV\right]$, respectively, and a simple mass ratio of $m_1/m_2=1$. As in the collisionless test, particles are laid out in a uniform grid, with weights sampled from the equilibrium Maxwellian at their respective species' temperatures. These initial Maxwellian weights are given by,
\begin{align}
    w_p = \int\int_{\boldsymbol{v}_p - h/2}^{\boldsymbol{v}_p + h/2} \frac{1}{\sqrt{2\pi\theta}} e^{(-\boldsymbol{v}^2/\theta)} \mathrm{d} \boldsymbol{v},
\end{align}
with $\theta = k_BT_s/m_s$, $h = 2L/N$ in the domain $[-L,L]^d$, and $N$ the number of particles per velocity space dimension. In this test, the domain size is $L=5 v_{th,s}$ where $v_{th,s}$ is the thermal velocity of the species, and the number of particles per species is $N=400$. Note that the spatial dimension, $\boldsymbol{x}$, has been omitted for this test as the particles exist in a single spatial cell, and collisions are applied locally. We use an initial time step size of $\Delta t = 0.01$, and allow the solver to reduce the size of the time step in situations where the nonlinear system becomes more difficult to solve.

We verify the results of this test using the thermal equilibration equations provided in the Naval Research Laboratory (NRL) Plasma Formulary~\cite{nrlformulary2023}. For two species, $\alpha$ and $\beta$, the thermal equilibration is given by,
\begin{equation}
    \frac{d T_{\alpha}}{dt} = \sum_{\beta} \bar{\nu}^{\alpha \backslash \beta} \left(T_{\beta} - T_{\alpha}\right),
\end{equation}
where $T_{\alpha}$ and $T_{\beta}$ are species temperatures and $\bar{\nu}^{\alpha \backslash \beta}$ is the interspecies collision rate. According to the Formulary, the collision rate can be written as,
\begin{equation}
    \bar{\nu}^{\alpha \backslash \beta} = 1.8\times 10^{-19} \frac{\left(m_{\alpha} m_{\beta}\right)^{1/2} Z_{\alpha}^{2} Z_{\beta}^{2} n_{\beta} \lambda_{\alpha \beta}}{\left(m_{\alpha} T_{\beta}+m_{\beta} T_{\alpha}\right)^{3/2}} \left[sec^{-1}\right],
\end{equation}
where
$m_{\alpha}$ and $m_{\beta}$ are species masses, $Z_{\alpha}$ and $Z_{\beta}$ are species charge states, $n_{\beta}$ the density of species $\beta$, and $\lambda_{\alpha \beta}$ the Coulomb logarithm. For our test with electrons and positrons, we normalize all the constants such that the collision rate becomes, $\bar{\nu}^{e \backslash p}=1$.

Shown in Figure~\ref{fig:equilibration}, both the DGDI and Euler integrators achieve full equilibration with some discrepancy between the data and the NRL equilibration rates. However, the equations presented in the NRL Plasma Formulary assume the plasma distributions remain exactly Maxwellian throughout the simulation. While the test thermalizes to a Maxwellian state, intermediate states may deviate from Maxwellian as it evolves and, therefore, some deviation from the NRL solution is expected. This is in line with previous works~\cite{Zonta2022,Adams2025_2}. Both time integration methods are relatively momentum conservative, with the average absolute error for the DGDI and Euler integrators being $\mathcal{O}(10^{-13})$ and $\mathcal{O}(10^{-11})$, respectively. We again use absolute error for momentum because the true values are close to zero. As noted by Carrillo in~\cite{Carrillo2020}, we expect momentum conservation to be unaffected by the use of a non-conservative time integrator up to $\mathcal{O} (\Delta t)$. The DGDI, however, achieves kinetic energy conservation on the order of $\mathcal{O}(10^{-12})$ while the Euler method does not, with relative kinetic energy errors of $\mathcal{O}(10^{-4})$. At early times, the discrete gradient momentum and kinetic energy error results are one to two orders of magnitude improved from the errors reported in~\cite{Hu2024} which uses a fixed point iterative solver. However, at around $t\sim0.8\left[\omega_p^{-1}\right]$ in the kinetic energy plot, the nonlinear solver reaches a configuration at which point it no longer achieves full convergence to the set tolerances. This explains the jump in energy at that time. The same phenomena happens around $t\sim1.0\left[\omega_p^{-1}\right]$ in the momentum plot but this jump does not appear in the energy data. 
These jumps do correlate, in part, to particle sampling where cases with a sparse sampling of the velocity space produce more instances where reaching solver tolerance is difficult. This coincides with the monotonicity of entropy as well, however, in tests increased sampling only alleviates this to a degree, with $N = 20$ being sufficient to mostly mitigate sampling related issues. The study of these solver failures is currently under investigation and will be explored in future work.

In Figure~\ref{fig:equilibration_error}, we show that using both the DGDI and Euler integrators results in monotonic regularized entropy of the system for much of the runtime. However, after the two species have been at equilibrium for a sufficiently long amount of time, both integrators encounter small non-monotonic fluctuations. The non-monotonic fluctuations are likely caused by small numerical errors, either introduced by the nonlinear solver with the DGDI or the non-conservative nature of the Euler method, where the entropy production of the system is almost exactly zero. Of particular interest is the monotonicity of entropy observed in the Euler data prior to equilibration. This is likely due to the inherent structure preserving nature of the particle-basis Landau operator, not easily broken by a poor choice in timestepping integrators. In future, more rigorous tests, we expect the DGDI method to outperform the non-conservative methods.

Thus, we have developed and tested a fully structure-preserving integration technique for the Landau equation that shows a clear advantages to previous methodologies~\cite{Carrillo2020}. Furthermore, in these collisional tests, it appears the use of an approximate nonlinear solver has had less impact on the structure-preserving qualities of the metric bracket than on the Poisson bracket, as the both the momentum and energy errors remains several orders of magnitude lower in this test than the collisionless tests. However, as with the collisionless tests, the DGDI was significantly less computationally efficient than the explicit methods. Careful considerations must therefore be made when choosing a time integration method for this implementation of the Landau operator. In future work, we plan to test this new integration method on more demanding test cases, such as two-species isotropization and Spitzer-resistivity.

\section{\label{sec:Conclusions}Conclusions}

We have presented the discrete gradient integrator, a structure-preserving time stepper for Hamiltonian systems, and considered the extension of this integrator to each component of the full metriplectic system: the collisionless Vlasov-Poisson system and the dissipative Landau equation for plasma collisions. Both the original Hamiltonian integrator and the collisional integrator, which we refer to as a \textit{discrete gradient dependent integrator}, were implemented in the \texttt{PETSc} library and tested on a suite of standard plasma tests: Landau damping, and electron-positron equilibration.

In the collisionless Landau damping test, we compared the discrete gradient integrator to the currently-used symplectic methods and a fourth-order Runge-Kutta method. The accuracy of the discrete gradients and symplectic integrators were comparable, while the Runge-Kutta method failed to capture the electric field damping and preserve any of the non-trivial conserved variables. The iterative nonlinear solver used for the implicit discrete gradient integrator method was significantly slower than the solver used for the explicit symplectic and Runge-Kutta methods, on average increasing the step time by $\sim 130\%$ and  $\sim 160\%$, respectively. This would seemingly indicate that the existing explicit symplectic timestepping integrators, which have similar structure-preserving qualities, are the preferred choice moving forward. However, by reducing the relative tolerance of the nonlinear solver, an improvement in the structure-preserving properties of the discrete gradient integrator method was found. While a tolerance limit was reached with the current L-BFGS method, an improved nonlinear solver paired with the discrete gradient framework should outperform the symplectic time stepper in structure-preserving quality. Generalized Broyden methods~\cite{Gerber1981} are a likely candidate for this and will be the focus of a future work. Furthermore, this conclusion does not take into account the larger time step sizes achievable using the implicit timestepping methods. Previous works have utilized implicit~\cite{Markidis2011}, or semi-implicit methods~\cite{Lapenta2017} to achieve ``exact'' energy conservation with larger time step sizes by making modifications to the PIC algorithm which could be implemented in the \texttt{PETSc-PIC} algorithm in future work. Optimizations to the nonlinear solvers would also make the implicit discrete gradient method a more competitive option in future work.

In the collisional, electron-positron equilibration test, the DGDI preserved the momentum to $\mathcal{O}(10^{-13})$ and kinetic energy to $\mathcal{O}(10^{-12})$. The entropy remained monotonic out to thermal equilibrium where some small non-monotonic fluctuations were encountered. As discussed, these non-monotonic fluctuations are likely caused by algebraic tolerances of the nonlinear solver and not a concern. This is a significant improvement from the original Euler time integrator presented in~\cite{Carrillo2020}. Along with the the work presented in~\cite{pusztay2023landaucollisionintegralparticle} and a separate implementation in~\cite{Zonta2022}, which uses a different machine architecture and methodology for computing integrals, this provides a cross-verification of the particle-basis Landau operator. Moreover, we have provided this new implementation in an open-source, community supported framework: \texttt{PETSc}. 

The ultimate goal of this work is to construct a combined metriplectic Vlasov-Poisson-Landau (or Vlasov-Maxwell-Landau) framework for plasma simulations. In simplest terms, the combined system can be evolved with~\eqref{eq:MetriplecticBracket}, summing the two brackets at each step. Combining the two brackets, however, would require significant computational costs, likely involving subcycling of one of the brackets at each step. Thus some efforts must first be put into improving the efficiency of the algorithm. A number of other improvements to the \texttt{PETSc-PIC} algorithm are currently in development, such as the inclusion of $H(div)$ finite elements, which preserve the electric field smoothness between cells, and noise reducing particle remapping algorithms. In~\cite{pusztay2023landaucollisionintegralparticle}, we refine aspects of the \texttt{PETSc} implementation, with additional test cases such as multi-species isotropization with realistic electron-ion mass ratios, omitted here for brevity.
In future work, we intend to derive a Jacobian matrix for the Vlasov-Poisson system and retest the discrete gradient integrator with a more efficient nonlinear solver. This is also applicable to the Landau operator system as well, however, deriving a Jacobian for this set of equations is a more complex task. Similar implementations in \texttt{PETSc}~\cite{Adams2017}, and elsewhere~\cite{Zonta2022}, have been ported to GPUs using CUDA, or a Kokkos~\cite{Kokkos2014} backend for portability requirements. Work to port this new collision algorithm has already begun leveraging the PetscDevice aspect of the library, and will provide the means for real-world physics studies in exascale environments.

\appendix

\section{Conservation Proofs for the Continuous Poisson Bracket} \label{sec:AppendixA}

\subsection{Note on Divergence Theorem for Periodic Boxes}

On a fully periodic domain $\left[0,L\right]^3$, any smooth function, $F(\boldsymbol{x})$, satisfies,
\begin{equation}
  \int_{\left[0,L\right]^3} \nabla_{\boldsymbol{x}} F(\boldsymbol{x}) \mathrm{d} \boldsymbol{x} = 0,
\end{equation}
\cite{Heath2012,Perin2014}. Equivalently, if you view $\left[0,L\right]^3$ as a box whose opposite faces are identified, then writing out the divergence theorem in with dimensional indices,
\begin{equation*}
  \int_{\left[0,L\right]^3} \frac{\partial}{\partial x_i} F(\boldsymbol{x}) \mathrm{d} \boldsymbol{x} = \int_{\left[0,L\right]^3_{(x_j,x_k)}} \left[F(x_i=L,x_j,x_k) - F(x_i=0,x_j,x_k)\right] \mathrm{d} x_j \mathrm{d} x_k
\end{equation*}
where $\left\{i,j,k\right\}=\left\{1,2,3\right\}$. However, periodicity means, 
\begin{equation*}
  F(x_i=L,x_j,x_k) = F(x_i=0,x_j,x_k)\;\;\text{for all }(x_j,x_k),
\end{equation*}
so the integrand on each face is zero. Hence
\begin{equation}
  \int_{\left[0,L\right]^3} \frac{\partial}{\partial x_i} F(\boldsymbol{x}) \mathrm{d} \boldsymbol{x} = 0.
\end{equation}
This property also holds if the spatial variable, $\boldsymbol{x}$, is swapped for velocity, $\boldsymbol{v}$.

\subsection{Continuous Equations}
As a reminder, the written out continuous form of the Poisson bracket is,
\begin{align*}
  \left\{\mathcal{F},\mathcal{G}\right\} \left[f_s\right] = &\sum_s \int \frac{f_s}{m_s} \left[\nabla_{\boldsymbol{x}} \frac{\delta \mathcal{F}}{\delta f_s} \cdot \nabla_{\boldsymbol{v}} \frac{\delta \mathcal{G}}{\delta f_s} - \nabla_{\boldsymbol{x}} \frac{\delta \mathcal{G}}{\delta f_s} \cdot \nabla_{\boldsymbol{v}} \frac{\delta \mathcal{F}}{\delta f_s} \right] d\boldsymbol{z},
\end{align*}
and the mass, momentum and Hamiltonian functionals, plus entropy and regularized entropy, are given by,
\begin{align*}
    \mathcal{M} &= \sum_s \int  m_s  f_s \mathrm{d} \boldsymbol{z},\\
    \mathcal{P} &= \sum_s \int  m_s |\boldsymbol{v}| f_s \mathrm{d} \boldsymbol{z},\\
    \mathcal{H} &= \sum_s \int \frac{1}{2} m_s |\boldsymbol{v}|^2 f_s \mathrm{d} \boldsymbol{z} + \frac{1}{2} \int |\nabla \phi|^2 \mathrm{d} \boldsymbol{z}, \\
    \mathcal{S} &= - \sum_s \int f_s \ln f_s \mathrm{d} \boldsymbol{z},\\
    \mathcal{S}_\epsilon &=-\sum_s \int f_s \ast \psi_{\epsilon}\left(\boldsymbol{z}\right) \ln \left(f_s \ast \psi_{\epsilon}\left(\boldsymbol{z}\right)\right) \mathrm{d} \boldsymbol{z}. 
\end{align*}
We will now show each of these functionals is conserved by the Poisson bracket.

\subsection{Mass}
Conservation of mass is simple to prove as,
\begin{equation*}
    \frac{\delta \mathcal{M}}{\delta f_s} = m_s \rightarrow \nabla_{\boldsymbol{x}} \frac{\delta \mathcal{M}}{\delta f_s} = 0, \;\;\nabla_{\boldsymbol{v}} \frac{\delta \mathcal{M}}{\delta f_s} = 0,
\end{equation*}
and therefore $\left\{\mathcal{M},\mathcal{G}\right\} = 0$. 

\subsection{Momentum}

Recalling the definition of the momentum functional~\eqref{eq:MomFunc}, the functional derivative with respect to $f_s$ is,
\begin{equation*}
  \frac{\delta \mathcal{P}}{\delta f_s} = m_s \boldsymbol{v},
\end{equation*}
and the gradients of the functional derivative $\frac{\delta \mathcal{P}}{\delta f_s}$ are,
\begin{equation*}
  \nabla_{\boldsymbol{x}} \frac{\delta \mathcal{P}}{\delta f_s} = 0, \;\; \nabla_{\boldsymbol{v}} \frac{\delta \mathcal{P}}{\delta f_s} = m_s \mathbb{I},
\end{equation*}
Plugging the momentum functional into the Poisson bracket~\eqref{eq:PoissonBracket} with the above relations then gives,
\begin{align*}
  \left\{\mathcal{P},\mathcal{G}\right\} \left[f_s\right] = &\sum_s \int \frac{f_s}{m_s} \left[\nabla_{\boldsymbol{x}} \frac{\delta \mathcal{P}}{\delta f_s} \cdot \nabla_{\boldsymbol{v}} \frac{\delta \mathcal{G}}{\delta f_s} - \nabla_{\boldsymbol{x}} \frac{\delta \mathcal{G}}{\delta f_s} \cdot \nabla_{\boldsymbol{v}} \frac{\delta \mathcal{P}}{\delta f_s} \right] d\boldsymbol{z} \\
  &= -\sum_s \int f_s \left[\nabla_{\boldsymbol{x}} \frac{\delta \mathcal{G}}{\delta f_s} \right] d\boldsymbol{z},
\end{align*}
This integral does not obviously vanish, as $\mathcal{G}$ may be a function of $\boldsymbol{x}$.
Thus, we cannot prove that the momentum functional is a Casimir of the continuous bracket. However, if we substitute the Hamiltonian functional, $\mathcal{H}$ into the bracket for $\mathcal{G}$, we may show that momentum is a conserved moment of the system. First consider the functional derivatives of the Hamiltonian,
\begin{equation*}
  \frac{\delta \mathcal{H}}{\delta f_s} = \frac{m_s}{2} |\boldsymbol{v}|^2,
\end{equation*}
and its gradients,
\begin{equation*}
  \nabla_{\boldsymbol{x}} \frac{\delta \mathcal{H}}{\delta f_s} = 0,\;\;\nabla_{\boldsymbol{v}} \frac{\delta \mathcal{H}}{\delta f_s} = m_s \boldsymbol{v}
\end{equation*}
Returning to the reduced form of the Poisson bracket with the momentum functional we can plug the Hamiltonian functional in to get,
\begin{align*}
  \left\{\mathcal{P},\mathcal{H}\right\} &= -\sum_s \int f_s \left[\nabla_{\boldsymbol{x}} \frac{\delta \mathcal{H}}{\delta f_s} \right] d\boldsymbol{z}\\
  &= 0,\\
\end{align*}
and momentum is conserved.

\subsection{Hamiltonian (Total Energy)}

Proving the Hamiltonian functional commutes with itself is straightforward for the continuous Poisson bracket as the bracket is antisymmetric. Plugging, $\mathcal{H}$ into the Poisson bracket with itself reveals,
\begin{align*}
  \left\{\mathcal{H},\mathcal{H}\right\} \left[f_s\right] = &\sum_s \int \frac{f_s}{m_s} \left[\nabla_{\boldsymbol{x}} \frac{\delta \mathcal{H}}{\delta f_s} \cdot \nabla_{\boldsymbol{v}} \frac{\delta \mathcal{H}}{\delta f_s} - \nabla_{\boldsymbol{x}} \frac{\delta \mathcal{H}}{\delta f_s} \cdot \nabla_{\boldsymbol{v}} \frac{\delta \mathcal{H}}{\delta f_s} \right] d\boldsymbol{z},\\
  &= 0.
\end{align*}

\subsection{Entropy}

Recalling the definition of the entropy functional~\eqref{eq:EntFunc}, the functional derivative with respect to $f_s$ is,
\begin{equation*}
  \frac{\delta \mathcal{S}}{\delta f_s} = - \left[\log (f_s) + 1\right],
\end{equation*}
and the gradients of the functional derivative $\frac{\delta \mathcal{S}}{\delta f_s}$ are,
\begin{equation*}
  \nabla_{\boldsymbol{x}} \frac{\delta \mathcal{S}}{\delta f_s} = -\frac{1}{f_s} \nabla_{\boldsymbol{x}} f_s, \;\; \nabla_{\boldsymbol{v}} \frac{\delta \mathcal{S}}{\delta f_s} = -\frac{1}{f_s} \nabla_{\boldsymbol{v}} f_s,
\end{equation*}
Plugging these relations into the Poisson bracket reveals,
\begin{align*}
  \left\{\mathcal{S},\mathcal{G}\right\} &= \sum_s \int \frac{f_s}{m_s} \left[\nabla_{\boldsymbol{x}} \frac{\delta \mathcal{S}}{\delta f_s} \cdot \nabla_{\boldsymbol{v}} \frac{\delta \mathcal{G}}{\delta f_s} - \nabla_{\boldsymbol{x}} \frac{\delta \mathcal{G}}{\delta f_s} \cdot \nabla_{\boldsymbol{v}} \frac{\delta \mathcal{S}}{\delta f_s} \right] d\boldsymbol{z}, \\
  &= -\sum_s \int \frac{1}{m_s} \left[\left( \nabla_{\boldsymbol{x}} f_s\right) \cdot \nabla_{\boldsymbol{v}} \frac{\delta \mathcal{G}}{\delta f_s} - \nabla_{\boldsymbol{x}} \frac{\delta \mathcal{G}}{\delta f_s} \cdot \left(\nabla_{\boldsymbol{v}} f_s\right) \right] d\boldsymbol{z}.
\end{align*}
If we integrate each component of this integral by parts, ignoring boundary terms again, we get,
\begin{align*}
  \left\{\mathcal{S},\mathcal{G}\right\} &= -\sum_s \int \frac{f_s}{m_s} \left[-\nabla_{\boldsymbol{v}} \cdot \left( \nabla_{\boldsymbol{x}} \frac{\delta \mathcal{G}}{\delta f_s}\right) + \nabla_{\boldsymbol{x}} \cdot \left(\nabla_{\boldsymbol{v}} \frac{\delta \mathcal{G}}{\delta f_s}\right) \right] d\boldsymbol{z},\\
  &= \sum_s \int \frac{1}{m_s} \left[\nabla_{\boldsymbol{x}} \cdot \left( \nabla_{\boldsymbol{v}} \frac{\delta \mathcal{G}}{\delta f_s}\right) - \nabla_{\boldsymbol{x}} \cdot \left(\nabla_{\boldsymbol{v}} \frac{\delta \mathcal{G}}{\delta f_s}\right) \right] d\boldsymbol{z},\\
  &= 0,
\end{align*}
since the gradients commute. Thus, entropy is also a Casimir of the continuous bracket.

\subsection{Regularized Entropy}

The functional derivative of the regularized entropy~\eqref{RegEntropy} is given as,
\begin{align*}
    \frac{\delta \mathcal{S}_{\epsilon}}{\delta f_s} &= -\psi_{\epsilon} \ast \left(1+\ln \left(f_s \ast \psi_{\epsilon}\right)\right),
\end{align*}
which we will define as,
\begin{align*}
    \frac{\delta \mathcal{S}_{\epsilon}}{\delta f_s} &= - \Phi_s (\boldsymbol{z}).
\end{align*}
For simplicity, we can for now avoid showing the gradients of this functional derivative and write the Poisson bracket as,
\begin{align*}
  \left\{\mathcal{S}_{\epsilon},\mathcal{G}\right\} \left[f_s\right] &= \sum_s \int \frac{f_s}{m_s} \left[\nabla_{\boldsymbol{x}} \frac{\delta \mathcal{S}_{\epsilon}}{\delta f_s} \cdot \nabla_{\boldsymbol{v}} \frac{\delta \mathcal{G}}{\delta f_s} - \nabla_{\boldsymbol{x}} \frac{\delta \mathcal{G}}{\delta f_s} \cdot \nabla_{\boldsymbol{v}} \frac{\delta \mathcal{F}}{\delta f_s} \right] \mathrm{d}\boldsymbol{z}, \\
  &= -\sum_s \int \frac{f_s}{m_s} \left[\nabla_{\boldsymbol{x}} (\Phi_s (\boldsymbol{z})) \cdot \nabla_{\boldsymbol{v}} \frac{\delta \mathcal{G}}{\delta f_s} - \nabla_{\boldsymbol{x}} \frac{\delta \mathcal{G}}{\delta f_s} \cdot \nabla_{\boldsymbol{v}} (\Phi_s (\boldsymbol{z})) \right] \mathrm{d}\boldsymbol{z}. \\
\end{align*}
As with the momentum proof, we can show that this bracket vanishes when the Hamiltonian functional, $\mathcal{H}$, is plugged in,
\begin{align*}
  \left\{\mathcal{S}_{\epsilon},\mathcal{H}\right\} &= -\sum_s \int \frac{f_s}{m_s} \left[\nabla_{\boldsymbol{x}} (\Phi_s (\boldsymbol{z})) \cdot \nabla_{\boldsymbol{v}} \frac{\delta \mathcal{H}}{\delta f_s} - \nabla_{\boldsymbol{x}} \frac{\delta \mathcal{H}}{\delta f_s} \cdot \nabla_{\boldsymbol{v}} (\Phi_s (\boldsymbol{z})) \right] \mathrm{d} \boldsymbol{z}, \\
  &= -\sum_s \int f_s \left[\nabla_{\boldsymbol{x}} (\Phi_s (\boldsymbol{z})) \cdot \boldsymbol{v} \right] \mathrm{d} \boldsymbol{z}. \\
\end{align*}
We can write $(\boldsymbol{v} \cdot \nabla_{\boldsymbol{x}}) \Phi_s = \nabla_{\boldsymbol{x}} (\boldsymbol{v} \Phi_s) - \Phi_s (\nabla_{\boldsymbol{x}} \cdot \boldsymbol{v})= \nabla_{\boldsymbol{x}} (\boldsymbol{v} \Phi_s)$, and therefore $ f_s \nabla_{\boldsymbol{x}} \cdot (\boldsymbol{v} \Phi_s) = \nabla_{\boldsymbol{x}} \cdot (f_s  \boldsymbol{v} \Phi_s) - (\nabla_{\boldsymbol{x}} f_s) \cdot (\boldsymbol{v} \Phi_s)$. The first integral then becomes, 
\begin{equation*}
    \left\{\mathcal{S}_{\epsilon},\mathcal{H}\right\}  = -\sum_s \int \nabla_{\boldsymbol{x}} \cdot (f_s  \boldsymbol{v} \Phi_s) \mathrm{d} \boldsymbol{z} + \sum_s \int (\nabla_{\boldsymbol{x}} f_s) \cdot (\boldsymbol{v} \Phi_s) \mathrm{d} \boldsymbol{z}.
\end{equation*}
The first term is a divergence in $\boldsymbol{x}$, which vanishes under periodic boundary conditions. Using the same product rule trick on the second term we can show,
\begin{equation*}
    \left\{\mathcal{S}_{\epsilon},\mathcal{H}\right\} = \sum_s \int f_s \boldsymbol{v} \cdot (\nabla_{\boldsymbol{x}} \Phi_s) \mathrm{d} \boldsymbol{z}.
\end{equation*}
which is the negative of the original bracket before the expansion. Therefore, the second term must be zero and the regularized entropy is a conserved quantity of the continuous bracket.

\section{Running Discrete Gradient Tests in \texttt{PETSc} Library}

The data presented in this paper can be recreated with \texttt{PETSc} using the \texttt{TS} example ex2 (\texttt{\$PETSC\_DIR/src/ts/tutorials/hamiltonian/ex2.c}). Exact runtimes may vary depending on the architecture and compiler. The \texttt{TS} example can be run using the following options:
\begin{verbatim}
./ex2 -dm_plex_dim 1 -dm_plex_simplex 0 -dm_plex_box_faces 80 \
 -dm_plex_box_lower 0. -dm_plex_box_upper 12.5664 \
 -dm_plex_box_bd periodic -dm_swarm_num_species 1 \
 -vdm_plex_dim 1 -vdm_plex_simplex 0 -vdm_plex_box_faces 6000 \
 -vdm_plex_box_lower -10 -vdm_plex_box_upper 10 \
 -charges -1.,1. -sigma 1.0e-8  -petscspace_degree 1 \
 -cosine_coefficients 0.01,0.5 -perturbed_weights -total_weight 1. \
 -ts_dt 0.01 -ts_max_time 30 -ts_max_steps 3000 -remap_freq 0\
 -em_snes_atol 1.e-15 -em_type primal -em_pc_type svd \
 -ts_type discgrad -ts_discgrad_type average \
 -snes_fd -snes_type qn -snes_qn_type lbfgs 
\end{verbatim}
To run either the symplectic or Runge-Kutta case, the user can replace the options \texttt{-ts\_type discgrad -ts\_discgrad\_type average} with \texttt{-ts\_type basicsymplectic -ts\_basicsymplectic\_type 1}
or \texttt{-ts\_type rk -ts\_rk\_type 4}, respectively. This change will also make the final line, specifying the nonlinear solver type, unnecessary, as the new explicit method will use a linear Krylov solver.

\section*{Acknowledgments}

This project was supported by the Office of Naval Research and by an appointment to the NRC Research Associateship Program at the Naval Research Laboratory, administered by the Fellowships Office of the National Academies of Sciences, Engineering, and Medicine. This material is also based upon work supported by the U.S. Department of Energy, Office of Science, Office of Advanced Scientific Computing Research and Office of Fusion Energy Sciences, Scientific Discovery through Advanced Computing (SciDAC) program through the FASTMath Institute under Contract No. DE-AC02-05CH11231 at Lawrence Berkeley National Laboratory. DSF and MGK were partially supported by NSF CSSI grant 1931524.

\appendix

\bibliographystyle{elsarticle-num}
\bibliography{ref_DG}

@article{Adams2017,
author = {Adams, Mark F. and Hirvijoki, Eero and Knepley, Matthew G. and Brown, Jed and Isaac, Tobin and Mills, Richard},
title = {Landau Collision Integral Solver with Adaptive Mesh Refinement on Emerging Architectures},
journal = {SIAM Journal on Scientific Computing},
volume = {39},
number = {6},
pages = {C452-C465},
year = {2017},
doi = {10.1137/17M1118828},
URL = {https://doi.org/10.1137/17M1118828},
eprint = {https://doi.org/10.1137/17M1118828}
}

@article{Adams2025,
      title={A projection method for particle resampling}, 
      author={Mark F. Adams and Daniel S. Finn and Matthew G. Knepley and Joseph V. Pusztay},
      year={2025},
      eprint={2501.13681},
      archivePrefix={arXiv},
      primaryClass={physics.plasm-ph},
      url={https://arxiv.org/abs/2501.13681}, 
}

@article{Adams2025_2,
	author = {Adams, Mark F. and Wang, Peng and Merson, Jacob and Huck, Kevin and Knepley, Matthew G.},
	doi = {10.1137/24M1640252},
	eprint = {https://doi.org/10.1137/24M1640252},
	journal = {SIAM Journal on Scientific Computing},
	number = {2},
	pages = {B360-B381},
	title = {A Performance Portable, Fully Implicit Landau Collision Operator with Batched Linear Solvers},
	url = {https://doi.org/10.1137/24M1640252},
	volume = {47},
	year = {2025}}

@techreport{AbhyankarBrownConstantinescuGhoshSmith2014,
  title       = {{PETSc/TS}: A Modern Scalable {DAE/ODE} Solver Library},
  author      = {Shrirang Abhyankar and Jed Brown and Emil Constantinescu and Debojyoti Ghosh and Barry F. Smith},
  type        = {Preprint},
  number      = {ANL/MCS-P5061-0114},
  institution = {ANL},
  month       = {1},
  year        = {2014},
  petsc_uses  = {TS}
}

@book{BirdsallLangdon,
	author = {Birdsall, Charles K. and Langdon, A. Bruce.},
	title = {Plasma physics via computer simulation},
	publisher = {McGraw-Hill},
	year = {c1985}
}

@article{BrownKnepleySmith14,
	author = {Jed Brown and Matthew G. Knepley and Barry Smith},
	doi = {10.1109/MCSE.2014.95},
	journal = {IEEE Computing in Science and Engineering},
	month = {1},
	number = {1},
	pages = {38--45},
	petsc_uses = {KSP},
	title = {Run-time extensibility and librarization of simulation software},
	volume = {17},
	year = {2015}
}

@article{Broyden1970,
    author = {Broyden, C. G.},
    title = {The Convergence of a Class of Double-rank Minimization Algorithms 1. General Considerations},
    journal = {IMA Journal of Applied Mathematics},
    volume = {6},
    number = {1},
    pages = {76-90},
    year = {1970},
    month = {03},
    issn = {0272-4960},
    doi = {10.1093/imamat/6.1.76},
    url = {https://doi.org/10.1093/imamat/6.1.76},
    eprint = {https://academic.oup.com/imamat/article-pdf/6/1/76/2233756/6-1-76.pdf},
}

@article{Carrillo2020,
   author = {Jose A. Carrillo and Jingwei Hu and Li Wang and Jeremy Wu},
   doi = {10.1016/j.jcpx.2020.100066},
   issn = {25900552},
   journal = {Journal of Computational Physics: X},
   title = {A particle method for the homogeneous {Landau} equation},
   volume = {7},
   year = {2020},
}

@article{Cohen2011,
	author = {Cohen, David and Hairer, Ernst},
	date = {2011/03/01},
	doi = {10.1007/s10543-011-0310-z},
	id = {Cohen2011},
	isbn = {1572-9125},
	journal = {BIT Numerical Mathematics},
	number = {1},
	pages = {91--101},
	title = {Linear energy-preserving integrators for Poisson systems},
	url = {https://doi.org/10.1007/s10543-011-0310-z},
	volume = {51},
	year = {2011}}

@article{Chen2011,
title = {An energy- and charge-conserving, implicit, electrostatic particle-in-cell algorithm},
journal = {Journal of Computational Physics},
volume = {230},
number = {18},
pages = {7018-7036},
year = {2011},
issn = {0021-9991},
doi = {https://doi.org/10.1016/j.jcp.2011.05.031},
url = {https://www.sciencedirect.com/science/article/pii/S0021999111003421},
author = {G. Chen and L. Chacón and D.C. Barnes},
}

@article{Eero2020,
doi = {10.1088/1361-6587/abe884},
url = {https://doi.org/10.1088/1361-6587/abe884},
year = 2021,
month = {3},
publisher = {{IOP} Publishing},
volume = {63},
number = {4},
pages = {044003},
author = {Eero Hirvijoki},
title = {Structure-preserving marker-particle discretizations of {Coulomb} collisions for particle-in-cell codes},
journal = {Plasma Physics and Controlled Fusion}
}

@article{Esirkepov2001,
	author = {T.Zh. Esirkepov},
	doi = {https://doi.org/10.1016/S0010-4655(00)00228-9},
	issn = {0010-4655},
	journal = {Computer Physics Communications},
	number = {2},
	pages = {144-153},
	title = {Exact charge conservation scheme for Particle-in-Cell simulation with an arbitrary form-factor},
	url = {https://www.sciencedirect.com/science/article/pii/S0010465500002289},
	volume = {135},
	year = {2001}
}

@article{Evstatiev2013,
   author = {E. G. Evstatiev and B. A. Shadwick},
   doi = {10.1016/j.jcp.2013.03.006},
   issn = {10902716},
   journal = {Journal of Computational Physics},
   title = {Variational formulation of particle algorithms for kinetic plasma simulations},
   volume = {245},
   year = {2013},
}

@article{Finn2023, 
    title={A numerical study of {Landau} damping with {PETSc-PIC}}, 
    volume={18}, 
    url={http://dx.doi.org/10.2140/camcos.2023.18.135}, 
    DOI={10.2140/camcos.2023.18.135}, 
    number={1}, 
    journal={Communications in Applied Mathematics and Computational Science}, 
    publisher={Mathematical Sciences Publishers}, 
    author={Finn, Daniel S. and Knepley, Matthew G. and Pusztay, Joseph V. and Adams, Mark F.}, 
    year={2023}, 
    month={12}, 
    pages={135–152}
}

@article{Gerber1981,
	author = {Gerber, Richard R. and Luk, Franklin T.},
	doi = {10.1137/0718061},
	eprint = {https://doi.org/10.1137/0718061},
	journal = {SIAM Journal on Numerical Analysis},
	number = {5},
	pages = {882-890},
	title = {A Generalized Broyden's Method for Solving Simultaneous Linear Equations},
	url = {https://doi.org/10.1137/0718061},
	volume = {18},
	year = {1981}}

@article{Gonzalez1996,
   author = {O. Gonzalez},
   doi = {10.1007/BF02440162},
   issn = {09388974},
   issue = {5},
   journal = {Journal of Nonlinear Science},
   title = {Time Integration and Discrete {Hamiltonian} Systems},
   volume = {6},
   year = {1996},
}

@Inbook{Harten1997,
author="Harten, Amiram and Lax, Peter D. and van Leer, Bram",
editor="Hussaini, M. Yousuff and van Leer, Bram and Van Rosendale, John",
title="On Upstream Differencing and Godunov-Type Schemes for Hyperbolic Conservation Laws",
bookTitle="Upwind and High-Resolution Schemes",
year="1997",
publisher="Springer Berlin Heidelberg",
address="Berlin, Heidelberg",
pages="53--79",
isbn="978-3-642-60543-7",
doi="10.1007/978-3-642-60543-7_4",
url="https://doi.org/10.1007/978-3-642-60543-7_4"
}

@article{Heath2012,
	author = {R.E. Heath and I.M. Gamba and P.J. Morrison and C. Michler},
	doi = {https://doi.org/10.1016/j.jcp.2011.09.020},
	issn = {0021-9991},
	journal = {Journal of Computational Physics},
	number = {4},
	pages = {1140-1174},
	title = {A discontinuous Galerkin method for the Vlasov--Poisson system},
	url = {https://www.sciencedirect.com/science/article/pii/S0021999111005523},
	volume = {231},
	year = {2012}}

@misc{Hirvijoki2018,
      title={Metriplectic Particle-in-Cell Integrators for the {Landau} Collision Operator}, 
      author={Eero Hirvijoki and Michael Kraus and Joshua W. Burby},
      year={2018},
      eprint={1802.05263},
      archivePrefix={arXiv},
      primaryClass={physics.comp-ph}
}

@article{Hirvijoki2020,
   author = {Eero Hirvijoki and Joshua W. Burby},
   doi = {10.1063/5.0011297},
   issn = {10897674},
   issue = {8},
   journal = {Physics of Plasmas},
   title = {Collisional gyrokinetics teases the existence of metriplectic reduction},
   volume = {27},
   year = {2020},
}

@article{Hirvijoki2021,
   author = {Eero Hirvijoki},
   doi = {10.1088/1361-6587/abe884},
   issn = {13616587},
   issue = {4},
   journal = {Plasma Physics and Controlled Fusion},
   title = {Structure-preserving marker-particle discretizations of {Coulomb} collisions for particle-in-cell codes},
   volume = {63},
   year = {2021},
}

@article{He2015,
   author = {Yang He and Hong Qin and Yajuan Sun and Jianyuan Xiao and Ruili Zhang and Jian Liu},
   doi = {10.1063/1.4938034},
   issn = {10897674},
   issue = {12},
   journal = {Physics of Plasmas},
   title = {{Hamiltonian} time integrators for {Vlasov-Maxwell} equations},
   volume = {22},
   year = {2015},
}

@article{Hu2024,
      title={Fully discrete energy-dissipative and conservative discrete gradient particle methods for a class of continuity equations}, 
      author={Jingwei Hu and Samuel Q. Van Fleet and Andy T. S. Wan},
      year={2024},
      eprint={2407.00533},
      archivePrefix={arXiv},
      primaryClass={math.NA},
      url={https://arxiv.org/abs/2407.00533}, 
}

@article{Itoh1988,
   author = {Toshiaki Itoh and Kanji Abe},
   doi = {10.1016/0021-9991(88)90132-5},
   issn = {10902716},
   issue = {1},
   journal = {Journal of Computational Physics},
   title = {{Hamiltonian}-conserving discrete canonical equations based on variational difference quotients},
   volume = {76},
   year = {1988},
}

@article{Kaufman1984,
   author = {Allan N. Kaufman},
   doi = {10.1016/0375-9601(84)90634-0},
   issn = {03759601},
   issue = {8},
   journal = {Physics Letters A},
   title = {Dissipative {Hamiltonian} systems: A unifying principle},
   volume = {100},
   year = {1984},
}

@article{Kokkos2014,
	author = {H. {Carter Edwards} and Christian R. Trott and Daniel Sunderland},
	doi = {https://doi.org/10.1016/j.jpdc.2014.07.003},
	issn = {0743-7315},
	journal = {Journal of Parallel and Distributed Computing},
	note = {Domain-Specific Languages and High-Level Frameworks for High-Performance Computing},
	number = {12},
	pages = {3202-3216},
	title = {Kokkos: Enabling manycore performance portability through polymorphic memory access patterns},
	url = {https://www.sciencedirect.com/science/article/pii/S0743731514001257},
	volume = {74},
	year = {2014}}

@article{Kraus2017,
   author = {Michael Kraus and Eero Hirvijoki},
   doi = {10.1063/1.4998610},
   issn = {10897674},
   issue = {10},
   journal = {Physics of Plasmas},
   title = {Metriplectic Integrators for the {Landau} Collision Operator},
   volume = {24},
   year = {2017},
}

@article{Kraus2017_GEMPIC,
   author = {Michael Kraus and Katharina Kormann and Philip J. Morrison and Eric Sonnendrücker},
   doi = {10.1017/S002237781700040X},
   issn = {14697807},
   issue = {4},
   journal = {Journal of Plasma Physics},
   title = {{GEMPIC}: {Geometric} electromagnetic particle-in-cell methods},
   volume = {83},
   year = {2017},
}

@article{Lapenta2017,
	author = {Giovanni Lapenta},
	doi = {https://doi.org/10.1016/j.jcp.2017.01.002},
	issn = {0021-9991},
	journal = {Journal of Computational Physics},
	pages = {349-366},
	title = {Exactly energy conserving semi-implicit particle in cell formulation},
	url = {https://www.sciencedirect.com/science/article/pii/S0021999117300128},
	volume = {334},
	year = {2017}
}

@article{Landau1936,
   author="L.D. Landau",
   title="Die kinetische Gleichung für den Fall {Coulombscher} Wechselwirkung",
   Journal="Phys. Zs. Sowjetunion",
   volume="10",
   pages="154–164",
   year="1936"
}

@article{Li2023,
	author = {Yingzhe Li},
	doi = {https://doi.org/10.1016/j.jcp.2022.111733},
	issn = {0021-9991},
	journal = {Journal of Computational Physics},
	pages = {111733},
	title = {Energy conserving particle-in-cell methods for relativistic {Vlasov--Maxwell} equations of laser-plasma interaction},
	url = {https://www.sciencedirect.com/science/article/pii/S0021999122007963},
	volume = {473},
	year = {2023}
}

@article{Liu2023,
	author = {Ju Liu},
	doi = {https://doi.org/10.1016/j.jcp.2023.112177},
	issn = {0021-9991},
	journal = {Journal of Computational Physics},
	pages = {112177},
	title = {On the design of non-singular, energy-momentum consistent integrators for nonlinear dynamics using energy splitting and perturbation techniques},
	url = {https://www.sciencedirect.com/science/article/pii/S0021999123002723},
	volume = {487},
	year = {2023}
   }

@article{Mansfield2009,
   author = {Elizabeth L Mansfield},
   journal = {Victoria},
   title = {On the construction of discrete gradients},
   year = {2009},
}

@article{Markidis2011,
    author = {Stefano Markidis and Giovanni Lapenta},
    doi = {https://doi.org/10.1016/j.jcp.2011.05.033},
    issn = {0021-9991},
    journal = {Journal of Computational Physics},
    number = {18},
    pages = {7037-7052},
    title = {The energy conserving particle-in-cell method},
    url = {https://www.sciencedirect.com/science/article/pii/S0021999111003445},
    volume = {230},
    year = {2011}
}

@article{Marsden1982,
   author = {Jerrold E. Marsden and Alan Weinstein},
   doi = {10.1016/0167-2789(82)90043-4},
   issn = {01672789},
   issue = {3},
   journal = {Physica D: Nonlinear Phenomena},
   title = {The {Hamiltonian} structure of the {Maxwell-Vlasov} equations},
   volume = {4},
   year = {1982},
}

@article{McLachlan1999,
   author = {Robert I. McLachlan and G. R.W. Quispel and Nicolas Robidoux},
   doi = {10.1098/rsta.1999.0363},
   issn = {1364503X},
   issue = {1754},
   journal = {Philosophical Transactions of the Royal Society A: Mathematical, Physical and Engineering Sciences},
   title = {Geometric integration using discrete gradients},
   volume = {357},
   year = {1999},
}

@article{McLaren2004,
   author = {D. I. McLaren and G. R.W. Quispel},
   doi = {10.1088/0305-4470/37/39/L01},
   issn = {03054470},
   issue = {39},
   journal = {Journal of Physics A: Mathematical and General},
   title = {Integral-preserving integrators},
   volume = {37},
   year = {2004},
}

@ARTICLE{Miloshevich2021,
       author = {{Miloshevich}, George and {Burby}, Joshua W.},
        title = "{Hamiltonian reduction of Vlasov-Maxwell to a dark slow manifold}",
      journal = {Journal of Plasma Physics},
         year = 2021,
        month = jun,
       volume = {87},
       number = {3},
          eid = {835870301},
        pages = {835870301},
          doi = {10.1017/S0022377821000556},
archivePrefix = {arXiv},
       eprint = {2104.03092},
 primaryClass = {physics.plasm-ph},
       adsurl = {https://ui.adsabs.harvard.edu/abs/2021JPlPh..87c8301M}
}

@ARTICLE{Mitchell2019,
       author = {{Mitchell}, Matthew S. and {Miecnikowski}, Matthew T. and {Beylkin}, Gregory and {Parker}, Scott E.},
        title = "{Efficient Fourier basis particle simulation}",
      journal = {Journal of Computational Physics},
         year = 2019,
        month = nov,
       volume = {396},
        pages = {837-847},
          doi = {10.1016/j.jcp.2019.07.023},
archivePrefix = {arXiv},
       eprint = {1808.03742},
 primaryClass = {physics.plasm-ph},
       adsurl = {https://ui.adsabs.harvard.edu/abs/2019JCoPh.396..837M}
}

@article{Morrison1986,
   author = {Philip J. Morrison},
   doi = {10.1016/0167-2789(86)90209-5},
   issn = {01672789},
   issue = {1-3},
   journal = {Physica D: Nonlinear Phenomena},
   title = {A paradigm for joined {Hamiltonian} and dissipative systems},
   volume = {18},
   year = {1986},
}

@article{Morrison1984,
   author = {Philip J. Morrison},
   doi = {10.1016/0375-9601(84)90635-2},
   issn = {03759601},
   issue = {8},
   journal = {Physics Letters A},
   title = {Bracket Formulation for Irreversible Classical Fields},
   volume = {100},
   year = {1984},
}

@article{Morrison1980,
	author = {Philip J. Morrison},
	doi = {https://doi.org/10.1016/0375-9601(80)90776-8},
	issn = {0375-9601},
	journal = {Physics Letters A},
	number = {5},
	pages = {383-386},
	title = {The {Maxwell-Vlasov} equations as a continuous {Hamiltonian} system},
	url = {https://www.sciencedirect.com/science/article/pii/0375960180907768},
	volume = {80},
	year = {1980}
}

@techreport{Morrison1981,
  author       = {Morrison, P J},
  title        = {Hamiltonian field description of two-dimensional vortex fluids and guiding center plasmas},
  institution  = {Princeton Plasma Physics Lab. (PPPL), Princeton, NJ (United States)},
  doi          = {10.2172/6351319},
  url          = {https://www.osti.gov/biblio/6351319},
  place        = {United States},
  year         = {1981},
  month        = {03}
}

@article{Morrison1982,
    author = {Morrison, Philip J.},
    title = "{Poisson brackets for fluids and plasmas}",
    journal = {AIP Conference Proceedings},
    volume = {88},
    number = {1},
    pages = {13-46},
    year = {1982},
    month = {07},
    issn = {0094-243X},
    doi = {10.1063/1.33633},
    url = {https://doi.org/10.1063/1.33633},
    eprint = {https://pubs.aip.org/aip/acp/article-pdf/88/1/13/11912899/13\_1\_online.pdf},
}

@article{Morrison2017,
    author = {Morrison, P. J.},
    title = {Structure and structure-preserving algorithms for plasma physics},
    journal = {Physics of Plasmas},
    volume = {24},
    number = {5},
    pages = {055502},
    year = {2017},
    month = {04},
    issn = {1070-664X},
    doi = {10.1063/1.4982054},
    url = {https://doi.org/10.1063/1.4982054},
    eprint = {https://pubs.aip.org/aip/pop/article-pdf/doi/10.1063/1.4982054/16678126/055502\_1\_online.pdf},
}

@article{Nanbu1997,
   author = {K. Nanbu},
   doi = {10.1103/PhysRevE.55.4642},
   issn = {1063651X},
   issue = {4},
   journal = {Physical Review E - Statistical Physics, Plasmas, Fluids, and Related Interdisciplinary Topics},
   title = {Theory of cumulative small-angle collisions in plasmas},
   volume = {55},
   year = {1997},
}

@article{Nocedal1980,
 ISSN = {00255718, 10886842},
 URL = {http://www.jstor.org/stable/2006193},
 author = {Jorge Nocedal},
 journal = {Mathematics of Computation},
 number = {151},
 pages = {773--782},
 publisher = {American Mathematical Society},
 title = {Updating Quasi-{Newton} Matrices with Limited Storage},
 urldate = {2024-10-29},
 volume = {35},
 year = {1980}
}

@article{Norton2014,
   author = {Richard A. Norton and G. R.W. Quispel},
   doi = {10.3934/dcds.2014.34.1147},
   issn = {10780947},
   issue = {3},
   journal = {Discrete and Continuous Dynamical Systems- Series A},
   title = {Discrete gradient methods for preserving a first integral of an ordinary differential equation},
   volume = {34},
   year = {2014},
}

@article{Perin2014,
	author = {M. Perin and C. Chandre and P.J. Morrison and E. Tassi},
	doi = {https://doi.org/10.1016/j.aop.2014.05.011},
	issn = {0003-4916},
	journal = {Annals of Physics},
	pages = {50-63},
	title = {Higher-order Hamiltonian fluid reduction of Vlasov equation},
	url = {https://www.sciencedirect.com/science/article/pii/S0003491614001250},
	volume = {348},
	year = {2014}
}

@misc{petsc-web-page,
  author = {Satish Balay and Shrirang Abhyankar and Mark~F. Adams and Steven Benson and Jed Brown
    and Peter Brune and Kris Buschelman and Emil~M. Constantinescu and Lisandro Dalcin and Alp Dener
    and Victor Eijkhout and Jacob Faibussowitsch and William~D. Gropp and V\'{a}clav Hapla and Tobin Isaac and Pierre Jolivet
    and Dmitry Karpeev and Dinesh Kaushik and Matthew~G. Knepley and Fande Kong and Scott Kruger
    and Dave~A. May and Lois Curfman McInnes and Richard Tran Mills and Lawrence Mitchell and Todd Munson
    and Jose~E. Roman and Karl Rupp and Patrick Sanan and Jason Sarich and Barry~F. Smith
    and Stefano Zampini and Hong Zhang and Hong Zhang and Junchao Zhang},
  title        = {{PETS}c {W}eb page},
  url          = {https://petsc.org/},
  howpublished = {\url{https://petsc.org/}},
  year         = {2022},
}

@manual{PETScManual,
   author = {Satish Balay and et. al.},
   title = {PETSc/TAO Users Manual},
   year = {2022},
   version = {Revision 3.18},
   organization = {Argonne National Laboratory},
   pagetotal = {310}
}

@article{Pusztay2022,
   author = {Pusztay, Joseph V. and Knepley, Matthew G. and Adams, Mark F.},
   title = {Conservative Projection Between Finite Element and Particle Bases},
   journal = {SIAM Journal on Scientific Computing},
   volume = {44},
   number = {4},
   pages = {C310-C319},
   year = {2022},
   doi = {10.1137/21M1454079},
   URL = {https://doi.org/10.1137/21M1454079},
   eprint = {https://doi.org/10.1137/21M1454079}
}

@article{Qin2016,
   author = {Hong Qin and Jian Liu and Jianyuan Xiao and Ruili Zhang and Yang He and Yulei Wang and Yajuan Sun and Joshua W. Burby and Leland Ellison and Yao Zhou},
   doi = {10.1088/0029-5515/56/1/014001},
   issn = {17414326},
   issue = {1},
   journal = {Nuclear Fusion},
   title = {Canonical symplectic particle-in-cell method for long-term large-scale simulations of the {Vlasov-Maxwell} equations},
   volume = {56},
   year = {2016},
}

@article{Quispel1996,
   author = {G. R.W. Quispel and G. S. Turner},
   doi = {10.1088/0305-4470/29/13/006},
   issn = {03054470},
   issue = {13},
   journal = {Journal of Physics A: Mathematical and General},
   title = {Discrete gradient methods for solving {ODEs} numerically while preserving a first integral},
   volume = {29},
   year = {1996},
}

@article{Shadwick2014,
   author = {B. A. Shadwick and A. B. Stamm and E. G. Evstatiev},
   doi = {10.1063/1.4874338},
   issn = {10897674},
   issue = {5},
   journal = {Physics of Plasmas},
   title = {Variational formulation of macro-particle plasma simulation algorithms},
   volume = {21},
   year = {2014},
}

@article{Squire2012,
   author = {J. Squire and H. Qin and W. M. Tang},
   doi = {10.1063/1.4742985},
   issn = {1070664X},
   issue = {8},
   journal = {Physics of Plasmas},
   title = {Geometric integration of the {Vlasov-Maxwell} system with a variational particle-in-cell scheme},
   volume = {19},
   year = {2012},
}

@article{Stamm2014,
   author = {Alexander B. Stamm and Bradley A. Shadwick and Evstati G. Evstatiev},
   doi = {10.1109/TPS.2014.2320461},
   issn = {00933813},
   issue = {6},
   journal = {IEEE Transactions on Plasma Science},
   title = {Variational formulation of macroparticle models for electromagnetic plasma simulations},
   volume = {42},
   year = {2014},
}

@Article{Vlasov1938,
 Author = {A. {Vlasov}},
 Title = {{\"Uber die Schwingungseigenschaften des Elektronengases}},
 FJournal = {{Zhurnal \`Eksperimentalno\u{\i} i Teoretichesko\u{\i} Fiziki}},
 Journal = {{Zh. \`Eksper. Teor. Fiz.}},
 ISSN = {0044-4510},
 Volume = {8},
 Pages = {291--318},
 Year = {1938},
 Publisher = {Academy of Sciences of the Union of Soviet Socialist Republics - USSR (Akademiya Nauk SSSR), Moscow; MAIK ``Nauka/ Interperiodika'', Moscow},
 Zbl = {0022.18102}
}

@article{Werner2025,
	author = {Werner, Gregory and Adams, Luke and Cary, John},
	doi = {10.48550/arXiv.2503.05123},
	month = {03},
	title = {Suppressing grid instability and noise in particle-in-cell simulation by smoothing},
	year = {2025}
}

@article{Xiao2018,
   author = {Jianyuan Xiao and Hong Qin and Jian Liu},
   doi = {10.1088/2058-6272/aac3d1},
   issn = {20586272},
   issue = {11},
   journal = {Plasma Science and Technology},
   title = {Structure-preserving geometric particle-in-cell methods for {Vlasov-Maxwell} systems},
   volume = {20},
   year = {2018},
}

@article{Xiao2015,
   author = {Jianyuan Xiao and Hong Qin and Jian Liu and Yang He and Ruili Zhang and Yajuan Sun},
   doi = {10.1063/1.4935904},
   issn = {10897674},
   issue = {11},
   journal = {Physics of Plasmas},
   title = {Explicit high-order non-canonical symplectic particle-in-cell algorithms for {Vlasov-Maxwell} systems},
   volume = {22},
   year = {2015},
}

@article{Xiao2016,
   author = {Jianyuan Xiao and Hong Qin and Philip J. Morrison and Jian Liu and Zhi Yu and Ruili Zhang and Yang He},
   doi = {10.1063/1.4967276},
   issn = {10897674},
   issue = {11},
   journal = {Physics of Plasmas},
   title = {Explicit high-order noncanonical symplectic algorithms for ideal two-fluid systems},
   volume = {23},
   year = {2016},
}

@article{Zonta2022,
    author = {Zonta, Filippo and Pusztay, Joseph V. and Hirvijoki, Eero},
    title = {Multispecies structure-preserving particle discretization of the {Landau} collision operator},
    journal = {Physics of Plasmas},
    volume = {29},
    number = {12},
    pages = {123906},
    year = {2022},
    month = {12},
    issn = {1070-664X},
    doi = {10.1063/5.0105182},
    url = {https://doi.org/10.1063/5.0105182},
    eprint = {https://pubs.aip.org/aip/pop/article-pdf/doi/10.1063/5.0105182/16633654/123906\_1\_online.pdf},
}

@misc{pusztay2023landaucollisionintegralparticle,
      title={The {Landau} Collision Integral in the Particle Basis in the {PETSc} Library}, 
      author={Joseph Pusztay and Filippo Zonta and Matt Knepley and Mark Adams},
      year={2023},
      eprint={2306.12606},
      archivePrefix={arXiv},
      primaryClass={physics.plasm-ph},
      url={https://arxiv.org/abs/2306.12606}, 
}

@manual{nrlformulary2023,
  title        = {NRL Plasma Formulary},
  author       = {J. D. Huba},
  organization = {Naval Research Laboratory},
  address      = {Washington, DC},
  year         = {2023},
  note         = {\url{https://www.nrl.navy.mil/Our-Work/Areas-of-Research/Plasma-Physics/Formulary/}},
}
\end{document}